\title{Weakly Submodular Function Maximization Using Local Submodularity Ratio} 
\titlerunning{Weakly Submodular Function Maximization Using Dynamic Submodularity Ratio} 
\author{Richard Santiago}{ETH Zurich, Switzerland}{rtorres@ethz.ch}{}{}
\author{Yuichi Yoshida}{National Institute of Informatics, Japan}{yyoshida@nii.ac.jp}{}{}
\authorrunning{Richard\,Santiago and Yuichi\,Yoshida} 
\keywords{weakly submodular, non-monotone, local submodularity ratio} 
\newtheorem{observation}[theorem]{Observation}
\newcommand{\R}{\mathbb{R}}
\newcommand{\E}{\mathbb{E}}
\renewcommand{\P}{\mathbb{P}}
\newcommand{\greedy}{\textsf{RandomizedGreedy }}
\DeclareMathOperator*{\argmax}{arg\,max}
\begin{document}
\maketitle

\begin{abstract}
  Weak submodularity is a natural relaxation of the diminishing return property, which is equivalent to submodularity.
  Weak submodularity has been used to show that many (monotone) functions that arise in practice can be efficiently maximized
  with provable guarantees.
  In this work we introduce two natural generalizations of weak submodularity for non-monotone functions. We show that
  an efficient randomized greedy algorithm has provable approximation guarantees for maximizing these functions subject to a cardinality constraint.
  We then provide a more refined analysis that takes into account that the weak submodularity parameter may change (sometimes improving) throughout the execution of the algorithm. This leads to improved approximation guarantees in some settings.
  We provide applications of our results for monotone and non-monotone maximization problems.
 \end{abstract}

\section{Introduction}
Submodularity is a property of set functions equivalent to the notion of diminishing returns. More formally, we say that a set function $f:2^E \to \R$ is \emph{submodular} if for any two sets $A\subseteq B \subseteq E$ and an element $e \notin B$, the corresponding marginal gains satisfy $f(A \cup \{e\}) -f(A) \geq f(B \cup \{e\}) -f(B)$. Submodularity has found a wide range of connections and applications to different computer science areas in recent years.

However, many applications in practice does not satisfy the diminishing returns property, but rather a weaker version of it. This has motivated several lines of work exploring different ways to relax the submodularity property 
\cite{das2011submodular,feige2013welfare,chen2018capturing,horel2016maximization,feige2015unifying,ghadiri2019beyond,ghadiri2020parameterized}.
One such relaxation that has received a lot of attention from the machine learning community is the notion of weak submodularity (we postpone the formal definition to Section~\ref{sec:definitions}), originally introduced by Das and Kempe~\cite{das2011submodular}.
They provided applications to the feature selection and the dictionary selection problems, and showed that the standard greedy algorithm achieves a $(1-e^{-\gamma})$-approximation for the monotone maximization problem subject to a cardinality constraint.
Here the parameter $\gamma \in [0,1]$ is called the submodularity ratio, and it measures how ``close'' the function is to being submodular.
Weak submodularity has found applications in areas such as linear and nonlinear sparse regression~\cite{elenberg2017streaming,khanna2017scalable}, high-dimensional subset selection~\cite{elenberg2018restricted}, interpretability of black-box neural network classifiers~\cite{elenberg2017streaming}, video summarization, splice site detection, and black-box interpretation of images~\cite{chen2018weakly}.

In subsequent work, Das and Kempe~\cite{das2018approximate}
left as an open question whether some of these theoretical guarantees can be extended to non-monotone objectives.
As their definition of weak submodularity is targeted at monotone functions, they raise the question of whether there is a more general definition  that retains some of the positive results of their work, while also yielding an analogue to non-monotone objectives.

One main goal of this work is to answer that question. We believe this is interesting for both theoretical and practical purposes, given that non-monotone submodular objectives have found a wide range of applications in computer science. Some of these include document summarization~\cite{lin2010multi,lin2011class}, MAP inference for determinantal point processes~\cite{gillenwater2012near}, personalized data summarization~\cite{mirzasoleiman2016fast}, nonparametric learning~\cite{zoubin2013scaling},  image summarization~\cite{tschiatschek2014learning}, and removing redundant elements from DNA sequencing~\cite{libbrecht2018choosing}. Hence, it seems natural to study how the approximation guarantees for non-monotone submodular maximization degrade
in terms of the submodularity ratio.

In this work we introduce a natural generalization of weak submodularity to the non-monotone setting.
We then show that a fast and simple randomized greedy algorithm retains some of the good theoretical guarantees available for (non-monotone) submodular objectives.
In addition, for monotone weakly submodular functions, this algorithm retains the approximation guarantee of $1-e^{-\gamma}$ given in~\cite{das2011submodular}.

A second main contribution of our work is to provide a more refined analysis that takes into account that the submodularity ratio parameter may change (some times improving) throughout the execution of the algorithm.
We provide several applications where this more refined bound leads to improved approximation guarantees, for both monotone and non-monotone maximization problems.

The rest of this section is organized as follows. In Section~\ref{sec:definitions} we extend weak submodularity to the non-monotone setting. In Section~\ref{sec:gamma_A,B} we discuss the notion of local submodularity ratio. We discuss several examples and applications in Section~\ref{sec:examples}. Our main contributions are presented in Section~\ref{sec:contributions}.  Additional related work regarding weak submodularity and non-monotone submodular maximization is discussed in Section~\ref{sec:related-work}.

\subsection{Weak submodularity and non-monotonicity}\label{sec:definitions}

Throughout this paper we use $f_A(B)$ to denote the marginal gain of adding the set $B$ to $A$, that is $f(A \cup B) - f(A)$.
A non-negative monotone set function $f:2^E \to \R_+$ is \emph{$\gamma$-weakly submodular} for some parameter $0 \leq \gamma \leq 1$, if for any pair of disjoint sets $A,B \subseteq E$, it satisfies
$
\sum_{e \in B} f_A (e) \geq \gamma \cdot f_A (B).
$
We note that this is the definition used in~\cite{bian2017guarantees,chen2018weakly,elenberg2017streaming}, which is slightly adapted from the original definition given in~\cite{das2011submodular, das2018approximate}. The parameter $\gamma$ is called the \emph{submodularity ratio}.

When $f$ is monotone, it is clear that for any value of $\gamma \in [0,1]$ the above class contains monotone submodular functions. However, for non-monotone objectives the marginal gains can be negative, and in this case we have $\gamma f_A (B) \geq f_A (B)$ whenever $f_A (B) \leq 0$, leading to a stronger condition than diminishing returns. This motivates us to introduce the following two classes of non-monotone non-submodular functions.



\begin{definition}[pseudo and weak submodularity]

	Given a scalar $0< \gamma\leq1$, we say that a set function $f:2^E \to \R_+$ is:
	\vspace*{0.05cm}
	\begin{enumerate}
		\item $\gamma$-pseudo submodular if $\sum_{e \in B} f_A (e) \geq \gamma f_A (B)$ for any pair of disjoint sets $A,B \subseteq E$.
		\vspace*{0.05cm}
		\item $\gamma$-weakly submodular if $\sum_{e \in B} f_A (e) \geq \min\{ \gamma f_A (B), \frac{1}{\gamma} f_A (B)\}$ for any $A,B \subseteq E$ disjoint.
	\end{enumerate}
\end{definition}

We first note that for monotone functions, the above two definitions are equivalent to the notion of $\gamma$-weakly submodularity from previous works~\cite{bian2017guarantees,chen2018weakly,elenberg2017streaming}. This follows immediately from the fact that monotone functions satisfy $f_A(B) \geq 0$ for all $A,B \subseteq E$, and hence $\min\{ \gamma f_A (B), \frac{1}{\gamma} f_A (B)\} = \gamma f_A (B)$.

For any value $\gamma \in (0,1]$ the above definition of $\gamma$-weakly submodularity leads to a weaker notion of diminishing returns (i.e., it contains non-monotone submodular functions). Indeed, if $f_A (B) \geq 0$ we have $\sum_{e \in B} f_A (e) \geq \gamma f_A (B)$, while if  $f_A (B) < 0$ we have $\sum_{e \in B} f_A (e) \geq \frac{1}{\gamma} f_A (B)$.  On the other hand, while the class of $\gamma$-pseudo submodular functions does not properly contain non-monotone submodular functions, it does contain functions that are not necessarily submodular. We show this 
in Figure~\ref{fig:function-hierarchy}.


\begin{figure}
	\centering
	\includegraphics[scale=0.8]{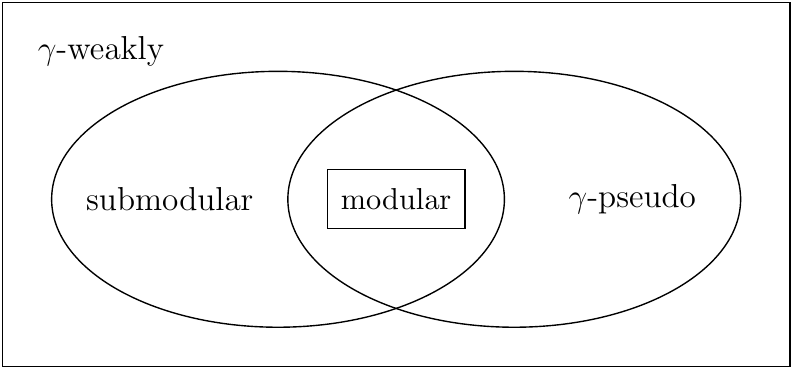}
	\caption{Hierarchy of the different function classes.}\label{fig:function-hierarchy}
\end{figure}

\subsection{Local submodularity ratio}\label{sec:gamma_A,B}

The submodularity ratio $\gamma$ is in general a very pessimistic bound for most applications. This is due to the fact that $\gamma$ is defined as a \emph{global} bound, in the sense that it must hold for any pair of disjoint sets $A,B \subseteq E$. Or at least for any pair of sets that are relevant to the execution of the algorithm, e.g., sets of cardinality at most $k$. We next discuss a natural way to ``refine'' this bound.

Given a function $f:2^E \to \R_+$ and any pair of disjoint sets $A,B \subseteq E$, in this work we denote by $\gamma^f_{A,B}$ any non-negative scalar satisfying $\sum_{e \in B} f_A (e) \geq \gamma^f_{A,B} \cdot f_A (B)$. When it is clear from the context we usually simplify the notation to $\gamma_{A,B}$ instead of $\gamma^f_{A,B}$. One of our contributions is showing how using these local bounds can be beneficial in some settings. In particular we discuss several natural classes of functions for which (i) one can compute explicit bounds for the value $\gamma_{A,B}$ (see Section~\ref{sec:examples}), and (ii) using the local bounds $\gamma_{A,B}$ (instead of $\gamma$) leads to significantly better theoretical guarantees (we discuss this in more detail in Section~\ref{sec:contributions}).
We believe this is interesting for both theoretical and practical applications.

\subsection{Examples and applications}\label{sec:examples}

In this section we present several classes of functions for which the parameter $\gamma_{A,B}$ can be bounded explicitly, and discuss  applications arising from these results. 
We postpone the proofs to Appendix~\ref{sec:appendix-examples-proof}.

Our first example is the so-called metric diversity function (also known as remote clique). Here we are given a metric (i.e., a distance that satisfies the triangle inequality)  $d:E \times E \to \R_+$ over a finite set $E$, where $d(u,v)$ measures the dissimilarity between two elements $u$ and $v$. One then defines a set function $f(S)= \frac{1}{2} \sum_{u\neq v \in S} d(u,v)$ that measures the diversity inside the set $S$. The problem $\max d(S): |S| \leq k$ of finding a diverse subset has been studied in the operations research community~\cite{hassin1997approximation, ravi1994heuristic, birnbaum2009improved}, and has found applications in other areas~\cite{agrawal2009diversifying, drosou2010search}.

\begin{restatable}{example}{metrics}\label{ex:diversity}
	Given a metric  $d:E \times E \to \R_+$, consider the function given by $f(S)=\sum_{ \{u,v\} \subseteq S} d(u,v)$, which is monotone and supermodular. Then, we have
	$
	\gamma_{A,B} \geq \frac{a}{a+b-1}
	$
	for any two disjoint sets $A,B \subseteq E$, where $a = |A|$ and $b = |B|$.

\end{restatable}

The works of~\cite{borodin2014weak,borodin2015proportionally} introduced the notion of proportionally submodular functions\footnote{They called them weakly submodular at first, and changed the name in subsequent work.}.
A set function $f: 2^E \to \mathbb{R}_+$ is \emph{proportionally submodular} if
$
	|S|f(T) + |T|f(S) \geq |S \cap T|f(S \cup T) + |S \cup T|f(S \cap T)
$
for every $S,T \subseteq E$. In the monotone setting, this class properly contains monotone submodular functions. In addition, this class also contains some non-submodular objectives such as the (supermodular) metric diversity function discussed in Example~\ref{ex:diversity}. Since these functions are closed under addition, the sum of a monotone submodular function and a metric diversity function is proportionally submodular. Our next result bounds the parameter $\gamma_{A,B}$ for this class, in both the monotone and non-monotone settings.

\begin{example}\label{ex:proportionally-submod}
	A non-negative proportionally submodular function $f: 2^E \to \mathbb{R}_+$ has
	$
	\gamma_{A,B} \geq \frac{3 a (1 + a)}{3 a^2 + 3 a b + b^2 - 1}
	$
	for any two disjoint sets $A,B \subseteq E$,
	where $a=|A|$ and $b=|B|$.
\end{example}

The above result leads to interesting applications. First it allows to improve over the current best approximation for maximizing a monotone proportionally submodular function subject to a cardinality constraint. In addition, combining this with other results from this work, we can also get improved approximations for the product $f\cdot g$ of a monotone submodular function $f$ and a monotone proportionally submodular function $g$. We discuss this in more detail in Section~\ref{sec:applications}.

\begin{restatable}{example}{product}\label{ex:f(S)g(S)}
	Let $f,g:2^E \to \mathbb{R}_+$ be two monotone set functions with parameters $\gamma^f_{A,B}$ and $\gamma^g_{
		A,B}$ respectively. Then the product function $h(S):=f(S) \cdot g(S)$ is also non-negative and monotone, with parameter
	\[
	\gamma_{A,B} \geq
	\begin{cases}
	\frac{f(A)}{f(A \cup B)} \gamma^g_{A,B} & \text{if } \gamma^f_{A,B} \geq \gamma^g_{A,B} \\
	\frac{g(A)}{g(A \cup B)} \gamma^f_{A,B} & \text{if } \gamma^g_{A,B} \geq \gamma^f_{A,B},
	\end{cases}
	\]
	for any two disjoint sets $A,B \subseteq E$. In particular, if $f$ and $g$ have global parameters $\gamma^f$ and $\gamma^g$ respectively, such that $\gamma^f \geq \gamma^g$, then the product function $h$ has parameter $\gamma_{A,B} \geq  \gamma^g \cdot \max \{ \frac{f(A)}{f(A \cup B)} , \frac{g(A)}{g(A \cup B)} \}$.
\end{restatable}

Using that submodular functions satisfy $\gamma_{A,B}\geq 1$, we can combine the above result with Examples~\ref{ex:diversity} and~\ref{ex:proportionally-submod} to get the following.

\begin{example}\label{ex:f(S)g(S)-apps}
	Let $f,g : 2^E \to \mathbb{R}_+$ be two monotone functions, and let $h(S):= f(S) \cdot g(S)$ be the product function with parameter $\gamma_{A,B}$. Then we have the following.
	\begin{enumerate}[(a)]
		\item If $f$ and $g$ are submodular then $\gamma_{A,B} \geq \max \{\frac{f(A)}{f(A \cup B)}, \frac{g(A)}{g(A \cup B)}\}$.

		\item If $f$ is submodular and $g$ is the metric diversity function from Example~\ref{ex:diversity}, then $\gamma_{A,B} \geq \frac{f(A)}{f(A \cup B)} \cdot \frac{a}{a+b-1}$, where $a=|A|$ and $b=|B|$.

		\item If $f$ is submodular and $g$ is proportionally submodular then $\gamma_{A,B} \geq \frac{f(A)}{f(A \cup B)} \cdot \frac{3 a (1 + a)}{3 a^2 + 3 a b + b^2 - 1}$, where $a=|A|$ and $b=|B|$.
	\end{enumerate}

\end{example}

By taking a non-monotone submodular function $f$, and either multiplying it or dividing by the cardinality function, we obtain a new function that is no longer submodular. The next example bounds the parameter $\gamma_{A,B}$ for these functions.

\begin{restatable}{example}{prodcardisubmod}\label{ex:|S|f(S)}
	Let $f:2^E \to \mathbb{R}_+$ be a submodular function. Then for any two disjoint sets $A,B \subseteq E$ with $|A|=a$ and $|B|=b$ we have the following.
	\begin{enumerate}[(a)]
		\item The function $g(S):=|S| \cdot f(S)$ satisfies $\gamma_{A,B} \geq \frac{a+1}{a+b}$.

		\item The function $g(S):=\frac{f(S)}{|S|}$ has $\gamma_{A,B} \leq \frac{a+b}{a+1}$.
	\end{enumerate}

\end{restatable}

We next discuss the behavior of the parameter $\gamma_{A,B}$ under summation, 
and how this result allows us to generalize some of the bounds previously discussed in this section.

\begin{restatable}{proposition}{sumproperty}\label{prop:sum-property}
	Let $f,g:2^E \to \mathbb{R}_+$ be two set functions with parameters $\gamma^f_{A,B}$ and $\gamma^g_{A,B}$ respectively. We have the following.
	\begin{enumerate}[(a)]
		\item If $f$ and $g$ are both monotone, then $f+g$ is also monotone with parameter $ \gamma_{A,B} \geq \min\{\gamma^f_{A,B}, \gamma^g_{A,B} \}$. In particular, if $0 \leq \gamma^g_{A,B} \leq \gamma^f_{A,B}$ holds for all pairs of disjoint sets $A$ and $B$, then $f+g$ has parameter $\gamma_{A,B} \geq \gamma^g_{A,B}$.

		\item If $f$ is monotone and $g$ is non-monotone, and $0 \leq \gamma^g_{A,B} \leq \gamma^f_{A,B}$ holds for all pairs of disjoint sets $A$ and $B$, then $f+g$ has parameter $\gamma_{A,B} \geq \gamma^g_{A,B}$.
	\end{enumerate}

\end{restatable}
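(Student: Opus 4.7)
My plan is to exploit the additivity of marginal gains with respect to the function. For any disjoint $A,B \subseteq E$ we have $(f+g)_A(e) = f_A(e) + g_A(e)$ and $(f+g)_A(B) = f_A(B) + g_A(B)$, so summing over $e \in B$ and invoking the defining inequalities of $\gamma^f_{A,B}$ and $\gamma^g_{A,B}$ gives
\[
\sum_{e \in B} (f+g)_A(e) \;\geq\; \gamma^f_{A,B}\, f_A(B) + \gamma^g_{A,B}\, g_A(B).
\]
The remaining task in each part is to verify that this lower bound dominates $\gamma_{A,B}\,(f+g)_A(B)$ for the claimed choice of $\gamma_{A,B}$.

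For part (a), monotonicity of $f+g$ is immediate from that of the summands. Since $f_A(B)\geq 0$ and $g_A(B)\geq 0$, setting $\gamma := \min\{\gamma^f_{A,B},\gamma^g_{A,B}\}$ yields
\[
\gamma^f_{A,B}\, f_A(B) + \gamma^g_{A,B}\, g_A(B) \;\geq\; \gamma\bigl(f_A(B)+g_A(B)\bigr) \;=\; \gamma\,(f+g)_A(B),
\]
which is exactly the required inequality. The second sentence of (a) is then just the instantiation in which this minimum equals $\gamma^g_{A,B}$.

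For part (b), the target bound
\[
\gamma^f_{A,B}\, f_A(B) + \gamma^g_{A,B}\, g_A(B) \;\geq\; \gamma^g_{A,B}\bigl(f_A(B)+g_A(B)\bigr)
\]
rearranges to $(\gamma^f_{A,B} - \gamma^g_{A,B})\, f_A(B) \geq 0$, where $f_A(B)\geq 0$ by monotonicity of $f$ and $\gamma^f_{A,B}-\gamma^g_{A,B}\geq 0$ by hypothesis. This one-line manipulation sidesteps any case split on the sign of $g_A(B)$. No serious obstacle arises because marginal gains are linear in the function; the only subtlety worth flagging is that the assumption $\gamma^g_{A,B}\leq\gamma^f_{A,B}$ in (b) is precisely what prevents a wrong-sign cancellation when $g_A(B)$ is negative, which is why the comparison is needed there even though part (a) required none.
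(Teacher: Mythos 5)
Your proof is correct and follows essentially the same route as the paper's: decompose $\sum_{e\in B}(f+g)_A(e)$ by additivity, apply the defining inequalities for $\gamma^f_{A,B}$ and $\gamma^g_{A,B}$, then in (a) use non-negativity of both $f_A(B)$ and $g_A(B)$ to pass to the minimum, and in (b) use $\gamma^f_{A,B}\ge\gamma^g_{A,B}$ together with $f_A(B)\ge 0$. The rearrangement to $(\gamma^f_{A,B}-\gamma^g_{A,B})f_A(B)\ge 0$ in part (b) is just a restatement of the paper's inequality step, not a different argument.
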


By combining the above  proposition with Examples~\ref{ex:diversity},~\ref{ex:|S|f(S)}, and~\ref{ex:proportionally-submod} we get the following.

\begin{restatable}{example}{divsubmod}\label{ex:nonmonotone}
	Let $f$ be a non-negative monotone submodular function. Then:
	\begin{itemize}
		\item The sum $f+g$ where $g$ is a metric diversity function satisfies $\gamma_{A,B} \geq \frac{|A|}{|A|+|B|-1}$.

		\item The sum $f(S) + |S|\cdot g(S)$ where $g$ is non-monotone submodular satisfies $\gamma_{A,B} \geq \frac{|A|+1}{|A|+|B|}$.

		\item The sum $f+g$ where $g$ is non-monotone proportionally submodular satisfies $\gamma_{A,B} \geq \frac{3 a (1 + a)}{3 a^2 + 3 a b + b^2 - 1}$, where $a=|A|$ and $b=|B|$.
	\end{itemize}

\end{restatable}

We can also combine the above result with Example~\ref{ex:f(S)g(S)} to get that the product function $(f+g) \cdot h$ satisfies $\gamma_{A,B} \geq \frac{f(A)}{f(A \cup B)} \cdot \frac{|A|}{|A|+|B|-1}$, whenever $f$ and $h$ are monotone submodular and $g$ is a metric diversity function.
This generalizes the bound from Example~\ref{ex:f(S)g(S)-apps} (b).

We note that the sum of a monotone submodular function and a metric diversity function has been previously studied~\cite{borodin2017max}. We discuss this in more detail in Section~\ref{sec:applications}.

\subsection{Additional related work}\label{sec:related-work}

The notion of weak submodularity was introduced by Das and Kempe~\cite{das2011submodular}, where they showed that the standard greedy algorithm achieves a $(1-e^{-\gamma})$-approximation for the monotone maximization problem subject to a cardinality constraint.
They provided applications to the feature selection problem for linear regression and the dictionary selection problems.
Khanna~et~al.~\cite{khanna2017scalable} showed that faster (such as distributed and stochastic) versions of the greedy algorithm also retain provable theoretical guarantees for monotone weakly submodular maximization under a cardinality constraint.
They discussed applications for the sparse linear regression problem and the support selection problem. 
Elenberg~et~al.~\cite{elenberg2017streaming} considered the
above problem in the random order streaming setting, and
provided applications to nonlinear sparse regression and interpretability of black-box neural network classifiers. Connections between weak submodularity and restricted strong convexity were shown by Elenberg~et~al.~\cite{elenberg2018restricted}, and used for high-dimensional subset selection problems.
The work of Chen~et~al.~\cite{chen2018weakly} goes beyond the cardinality constraint, and considers the monotone maximization problem subject to a matroid constraint.
They provided an approximation ratio of ${(1+1/ \gamma)}^{-2}$ for this problem, and discuss applications to video summarization, splice site detection, and black-box interpretation of images.
Gatmiry and Gomez \cite{gatmiry2018non} showed that the standard deterministic greedy algorithm also enjoys provable guarantees for the above problem, though worse than the one provided by~\cite{chen2018weakly}. They provide applications to tree-structured Gaussian graphical model estimation.
The recent work of Harshaw~et~al.~\cite{harshaw2019submodular} considers the problem $\max \{f(S)-m(S):|S| \leq k\}$, where $f$ is non-negative monotone $\gamma$-weakly submodular and $m$ is a non-negative modular function.
Using the special structure of this type of objective, they circumvented the potential roadblocks of $f-m$ being negative or non-monotone, and provided a bifactor guarantee satisfying $f(S)-m(S) \geq (1-e^{-\gamma}) f(OPT) - m(OPT)$.
In addition, they showed that this approximation ratio is tight in the value oracle model.

Non-monotone submodular maximization subject to a cardinality constraint has been studied extensively.
The first constant factor approximation for this problem was given by Lee~et~al.~\cite{lee2010maximizing}.
Since then a large series of works~\cite{buchbinder2014submodular,ene2016constrained,feldman2011unified,gharan2011submodular,gupta2010constrained,vondrak2013symmetry} have improved the approximation factor to the current best $0.385$ ratio due to Buchbinder and Feldman~\cite{buchbinder2019constrained}.
Some of the latter works, however, use an approach that involves using a continuous relaxation of the objective function and then applying rounding methods to the fractional solution.
While this approach has been extremely successful for proving strong theoretical guarantees, due to the run time they usually become impractical in real-world scenarios with large amounts of data.
In our work we use a randomized greedy algorithm proposed by Buchbinder et al.~\cite{buchbinder2014submodular}, where it is shown that this algorithm produces a $1/e$-approximation (on expectation).
On the inapproximability side, Gharan and Vondrak~\cite{gharan2011submodular} show that it is impossible to achieve a $0.491$ approximation for this problem in the value oracle model.

\subsection{Our contributions}\label{sec:contributions}

One main contribution of this work is showing that an easy-to-implement and fast randomized greedy algorithm (i.e., Algorithm~\ref{alg:random-greedy}) has provable theoretical guarantees for the problem $\max \{f(S): |S| \leq k\}$  when the function $f: 2^E \to \mathbb{R}_+$ is non-monotone weakly submodular (as defined in Section~\ref{sec:definitions}).
This is encapsulated in the following result.
To the best of our knowledge, this is the first time that weakly submodular functions are considered in the non-monotone setting.

\begin{theorem}\label{thm:non-monot-global}
	There exists an efficient randomized greedy algorithm which has an approximation ratio (on expectation) of at least $\gamma  {(1-1/\gamma k)}^{k-1}$
	for the problem of maximizing a non-negative non-monotone $\gamma$-weakly submodular function subject to a cardinality constraint of size $k$.
	This approximation ratio is asymptotically  $\gamma \cdot e^{-1/ \gamma}$ as $k \to \infty$.
	For non-negative non-monotone $\gamma$-pseudo submodular functions, the approximation ratio is of at least  $\gamma \cdot e^{-\gamma}$.
\end{theorem}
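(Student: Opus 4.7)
The plan is to analyze a randomized greedy algorithm in the style of Buchbinder, Feldman, Naor, and Schwartz. At each iteration $i=1,\ldots,k$, given the partial solution $S_{i-1}$, form a set $M_i\subseteq E\setminus S_{i-1}$ consisting of the $k$ elements of largest marginal value $f_{S_{i-1}}(\cdot)$, padded with dummy ``null'' elements of marginal $0$ whenever fewer than $k$ elements have positive marginal. Draw $u_i$ uniformly from $M_i$ and set $S_i=S_{i-1}\cup\{u_i\}$. Let $A_i=\E[f(S_i)]$ and $B_i=\E[f(S_i\cup\OPT)]$, so $A_0=0$ and $B_0=f(\OPT)$.

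The analysis rests on two recurrences. The \emph{loss lemma} bounds how fast $B_i$ can decay: $B_i\ge(1-\tfrac{1}{\gamma k})B_{i-1}$ in the weakly submodular setting and $B_i\ge(1-\tfrac{\gamma}{k})B_{i-1}$ in the pseudo submodular setting. To establish it, fix $S_{i-1}$ and expand
\[
B_{i-1}-\E[B_i\mid S_{i-1}] \;=\; -\frac{1}{k}\sum_{u\in M_i\setminus\OPT} f_{S_{i-1}\cup\OPT}(u),
\]
since the terms with $u\in\OPT$ contribute zero. Apply the relevant submodularity inequality with $A=S_{i-1}\cup\OPT$ and $B=M_i\setminus\OPT$, then combine with non-negativity $f(A\cup B)\ge 0$, which yields $f_A(B)\ge -f(A)$. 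For pseudo submodularity this gives $-\sum_{u\in B}f_A(u)\le -\gamma f_A(B)\le \gamma f(A)$; for weak submodularity only the branch $\tfrac{1}{\gamma}f_A(B)$ of the $\min$ binds when $f_A(B)<0$, giving the worse estimate $-\sum_{u\in B}f_A(u)\le \tfrac{1}{\gamma}f(A)$.

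The \emph{gain lemma} asserts $A_i-A_{i-1}\ge\tfrac{\gamma}{k}(B_{i-1}-A_{i-1})$ in both settings. Conditioning on $S_{i-1}$ one has $\E[f(S_i)-f(S_{i-1})\mid S_{i-1}]=\tfrac{1}{k}\sum_{u\in M_i}f_{S_{i-1}}(u)$; since $|\OPT\setminus S_{i-1}|\le k$ and $M_i$ contains the top-$k$ positive marginals, this sum is at least $\max\bigl(0,\sum_{u\in\OPT\setminus S_{i-1}}f_{S_{i-1}}(u)\bigr)$. Applying weak (resp.\ pseudo) submodularity with $A=S_{i-1}$ and $B=\OPT\setminus S_{i-1}$ lower-bounds the latter sum by $\gamma\bigl(f(S_{i-1}\cup\OPT)-f(S_{i-1})\bigr)$ when this marginal is non-negative, and the $\max$ with $0$ handles the opposite case, so the inequality $\sum_{u\in M_i}f_{S_{i-1}}(u)\ge\gamma\bigl(f(S_{i-1}\cup\OPT)-f(S_{i-1})\bigr)$ holds pointwise in $S_{i-1}$.

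Finally, combine the two recurrences. Iterating the loss lemma gives $B_{i-1}\ge\lambda^{i-1}f(\OPT)$ with $\lambda=1-\tfrac{1}{\gamma k}$ in the weakly submodular case and $\lambda=1-\tfrac{\gamma}{k}$ in the pseudo submodular case. Rewriting the gain lemma as $A_i\ge(1-\tfrac{\gamma}{k})A_{i-1}+\tfrac{\gamma}{k}B_{i-1}$ and unrolling with $A_0=0$ yields
\[
A_k \;\ge\; \frac{\gamma}{k}\sum_{i=1}^{k}\bigl(1-\tfrac{\gamma}{k}\bigr)^{k-i}\lambda^{i-1} f(\OPT).
\]
In the pseudo submodular case $\lambda=1-\gamma/k$ and the sum collapses to $k(1-\gamma/k)^{k-1}$, yielding $A_k\ge\gamma(1-\gamma/k)^{k-1}f(\OPT)\to \gamma e^{-\gamma}f(\OPT)$. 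In the weakly submodular case, using $\gamma\le 1$ one has $\lambda=1-\tfrac{1}{\gamma k}\le 1-\tfrac{\gamma}{k}$, so $(1-\gamma/k)^{k-i}\ge\lambda^{k-i}$ and the sum is at least $k\lambda^{k-1}$, giving $A_k\ge\gamma\bigl(1-\tfrac{1}{\gamma k}\bigr)^{k-1}f(\OPT)\to \gamma e^{-1/\gamma}f(\OPT)$. The main technical obstacle is the loss lemma in the weakly submodular setting: when the marginal $f_A(B)$ is negative only the second branch $\tfrac{1}{\gamma}f_A(B)$ of the $\min$ in the definition yields a usable lower bound on $\sum_{u\in B}f_A(u)$, and this is precisely where the $1/\gamma$ factor entering the rate $(1-\tfrac{1}{\gamma k})$ is paid.
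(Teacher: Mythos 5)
Your proposal is correct and follows essentially the same route as the paper: your ``loss lemma'' is the paper's Lemma~\ref{lem:semi1} (bounding $\E[f(S_i\cup\OPT)]$ via a case split on the sign of the marginal $f_{S_{i-1}\cup\OPT}(\cdot)$ and using non-negativity to absorb the negative branch), and your ``gain lemma'' is exactly the pointwise bound derived inside the proof of Theorem~\ref{thm:non-monot-local}, with the same use of $\max(0,\cdot)$ to handle the sign of $f_{S_{i-1}}(\OPT)$. The only cosmetic difference is that you unroll the resulting recurrence explicitly and specialize to a constant $\gamma$, whereas the paper proves the more general iteration-dependent bound by induction and then plugs in $\alpha_i=\bar\alpha_i=\gamma$, $\beta_i=\bar\beta_i=1/\gamma$ (resp.\ $=\gamma$).
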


We remark that when $\gamma$ approaches to $1$, our bounds recover the $1/e$ approximation factor given in~\cite{buchbinder2014submodular} for the analysis of the same algorithm over submodular functions (i.e., the case when $\gamma=1$).

A key ingredient for analyzing non-monotone objectives is to bound the term  $\E[f(S_i \cup \mathrm{OPT})]$ with respect to $f(\mathrm{OPT})$.
For submodular functions the work of~\cite{buchbinder2014submodular} (see their Lemma 2.2 and Observation 1) bounds the above term by using the diminishing returns property, i.e., $f_A (e) \geq f_B (e)$ whenever $A \subseteq B$ and $e \notin B$. However, it is not clear how one could imitate such argument in the case of non-submodular functions. In particular, it is not obvious whether from the definition of weak submodularity, one could find a parameter $\beta >0$ satisfying some approximate version $f_A (e) \geq \beta f_B (e)$ of diminishing returns.
We circumvent this issue by analyzing the quantity $\E[f(S_i \cup \mathrm{OPT})]$ directly with respect to the execution of the algorithm (see Lemma~\ref{lem:semi1}).

Another important piece of our work is to provide a more refined analysis that allows for the submodularity ratio
to change throughout the execution of the algorithm.
This is particularly useful since many classes of functions will usually satisfy this (see for instance Section~\ref{sec:examples}). 
Our most general result (Theorem~\ref{thm:non-monot-local})  assumes some local bounds 
for the submodularity ratio
throughout the algorithm, and provides approximation guarantees based on these bounds.
Its statement is somewhat less clean to express since it depends on the notation used in Algorithm~\ref{alg:random-greedy} (which we introduce later in Section~\ref{sec:randomized-greedy}), so we defer its full presentation and discussion to Section~\ref{sec:non-monotone}. We next present some of its consequences, which lead to some of our main applications.

\begin{theorem}\label{thm:local-param}
	Assume we run the randomized greedy algorithm described in Algorithm~\ref{alg:random-greedy} on a function $f:2^E \to \mathbb{R}_+$ with parameters $\gamma_{A,B} \in [0,1]$ for any pair of disjoint sets $A,B \subseteq E$. Moreover, assume there are values  $0 \leq \gamma_i \leq 1$ for $i \in \{0,1,2,\ldots,k-1\}$ so that
	$
	\sum_{e \in \mathrm{OPT}} f_{S_i} (e) \geq  \min\{ \gamma_i \cdot f_{S_i} (\mathrm{OPT}),  f_{S_i} (\mathrm{OPT})\}
	$
	holds for any possible solution $S_i$ of the algorithm after iteration $i$.
	Then the algorithm produces (on expectation):
	\vspace*{0.05cm}
	\begin{itemize}
		\item An approximation factor of at least $1- \exp (-\frac{1}{k}\sum_{i=0}^{k-1} \gamma_i)$ if $f$ is monotone.
		\vspace*{0.05cm}
		\item An approximation factor of at least $ \frac{1}{ek} \sum_{i=0}^{k-1} \gamma_i $ if $f$ is non-monotone.
	\end{itemize}
\end{theorem}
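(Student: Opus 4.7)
My plan is to reduce both halves of the theorem to a one-iteration expected-gain inequality and then telescope, adapting the analysis of~\cite{buchbinder2014submodular} for the submodular case to accommodate the local ratios $\gamma_i$. For the non-monotone branch, the essential auxiliary input is Lemma~\ref{lem:semi1}, which provides the bound $\E[f(S_i \cup \mathrm{OPT})] \geq (1-1/k)^i f(\mathrm{OPT})$ already alluded to in Section~\ref{sec:contributions}.

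For the monotone case, monotonicity forces $f_{S_i}(\mathrm{OPT}) \geq 0$, so the min in the hypothesis collapses and gives $\sum_{e \in \mathrm{OPT}} f_{S_i}(e) \geq \gamma_i\bigl(f(S_i \cup \mathrm{OPT}) - f(S_i)\bigr) \geq \gamma_i\bigl(f(\mathrm{OPT}) - f(S_i)\bigr)$. Since $u_{i+1}$ is drawn uniformly from the top-$k$ set $M_i$, whose marginal sum dominates $\sum_{e \in \mathrm{OPT}} f_{S_i}(e)$ (after the usual padding of $M_i$ with dummy elements of marginal $0$), one obtains
\[
\E\bigl[f(S_{i+1}) - f(S_i) \mid S_i\bigr] \;\geq\; \frac{\gamma_i}{k}\bigl(f(\mathrm{OPT}) - f(S_i)\bigr).
\]
Rearranging and taking total expectations yields $\E[f(\mathrm{OPT}) - f(S_{i+1})] \leq (1-\gamma_i/k)\,\E[f(\mathrm{OPT}) - f(S_i)]$; iterating and applying $1-x \leq e^{-x}$ gives the claimed $1 - \exp\bigl(-\tfrac{1}{k}\sum_i \gamma_i\bigr)$ approximation.

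The non-monotone case is more delicate because $\Delta_i := f(S_i \cup \mathrm{OPT}) - f(S_i)$ may be negative and the min in the hypothesis does not collapse. The key algebraic observation I would use is that
\[
\min\{\gamma_i \Delta_i,\, \Delta_i\} \;\geq\; \gamma_i\, f(S_i \cup \mathrm{OPT}) \;-\; f(S_i)
\]
holds unconditionally: when $\Delta_i \geq 0$ it reduces to $(1-\gamma_i) f(S_i) \geq 0$, and when $\Delta_i < 0$ it reduces to $(1-\gamma_i) f(S_i \cup \mathrm{OPT}) \geq 0$, both immediate from $f \geq 0$ and $\gamma_i \leq 1$. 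Plugging this into the one-step gain bound and taking total expectations yields the linear recursion
\[
\E[f(S_{i+1})] \;\geq\; \left(1-\frac{1}{k}\right) \E[f(S_i)] \;+\; \frac{\gamma_i}{k}\, \E[f(S_i \cup \mathrm{OPT})].
\]
Substituting Lemma~\ref{lem:semi1} and unrolling, the contribution of step $i$ to $\E[f(S_k)]$ becomes $\frac{\gamma_i}{k}(1-1/k)^{k-1} f(\mathrm{OPT})$; summing over $i$ and using $(1-1/k)^{k-1} \geq 1/e$ delivers the $\frac{1}{ek}\sum_i \gamma_i$ factor.

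The main obstacle is exactly the control of the min in the non-monotone case. A naive attempt to replace $\min\{\gamma_i \Delta_i, \Delta_i\}$ by $\gamma_i \Delta_i$ fails whenever $\Delta_i$ is negative; the inequality displayed above is the right fix because it absorbs the problematic shortfall into the $-f(S_i)$ term, preserving the homogeneous recursion form needed for the telescoping in~\cite{buchbinder2014submodular} to carry through, with the $\gamma_i$ entering only as multiplicative factors on the Lemma~\ref{lem:semi1} bound.
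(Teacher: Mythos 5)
Your proof is correct, and the overall architecture (one-step expected-gain inequality, Lemma~\ref{lem:semi1} to control $\E[f(S_i\cup\mathrm{OPT})]$, telescoping the resulting recursion) matches the paper's. For the monotone half you proceed exactly as the paper does: the min collapses, one gets the per-step recursion, and $1-x\leq e^{-x}$ finishes it.

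For the non-monotone half the route is genuinely different in one place. The paper derives, via Theorem~\ref{thm:non-monot-local}, the per-step lower bound
$\E[f_{S_{i-1}}(e_i)]\geq \frac{\gamma_{i-1}}{k}\bigl(f(S_{i-1}\cup\mathrm{OPT})-f(S_{i-1})\bigr)$
by splitting on the sign of $f_{S_{i-1}}(\mathrm{OPT})$ and, in the negative case, invoking $\E[f_{S_{i-1}}(e_i)]\geq 0$, which comes from the dummy-element construction (Observation~\ref{obs:random-greedy}). You instead absorb the min purely algebraically via
$\min\{\gamma_i\Delta_i,\Delta_i\}\geq \gamma_i f(S_i\cup\mathrm{OPT})-f(S_i)$,
which gives the slightly weaker per-step recursion $\E[f(S_{i+1})]\geq (1-\tfrac1k)\E[f(S_i)]+\tfrac{\gamma_i}{k}\E[f(S_i\cup\mathrm{OPT})]$ (coefficient $1-\tfrac1k$ rather than $1-\tfrac{\gamma_i}{k}$); after substituting Lemma~\ref{lem:semi1} and unrolling this still delivers the same $\tfrac1{ek}\sum_i\gamma_i$ factor. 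The trade-offs: your algebraic identity is self-contained and does not rely on the sign of $\E[f_{S_i}(e_{i+1})]$, whereas the paper's case split is what allows the more general Theorem~\ref{thm:non-monot-local}, where the parameter $\beta_i$ may exceed $1$; your identity uses $\gamma_i\leq 1$ in the $\Delta_i\geq 0$ case and breaks down when the ``upper'' coefficient exceeds $1$. Since Theorem~\ref{thm:local-param} explicitly assumes $\gamma_{A,B}\in[0,1]$, your streamlined argument is perfectly valid for the statement at hand, just not a drop-in replacement for the paper's more general Theorem~\ref{thm:non-monot-local}.

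One small bookkeeping point: you still need the dummy elements for the step $\E[f_{S_i}(e_{i+1})]\geq\tfrac1k\sum_{e\in\mathrm{OPT}}f_{S_i}(e)$ (Observation~\ref{obs:random-greedy}, item~3), so the padding is not avoided entirely --- only its use in the sign argument is.
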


We remark that for monotone $\gamma$-weakly submodular objectives the above result retains the $(1-e^{-\gamma})$-approximation given in~\cite{das2011submodular}. This follows by noticing that for monotone functions we always have that $\min\{ \gamma_i \cdot f_{S_i} (\mathrm{OPT}),  f_{S_i} (\mathrm{OPT})\} = \gamma_i \cdot f_{S_i} (\mathrm{OPT})$ since $f_{S_i} (\mathrm{OPT}) \geq 0$ and $\gamma_i \in [0,1]$. One can then use the $\gamma$-weak submodularity of the function to set $\gamma_i = \gamma$ for all $i$.


Combining the above theorem with the results from Section~\ref{sec:examples} leads to interesting applications. We now highlight some of them, and defer a more detailed discussion to Section~\ref{sec:applications}.
The above theorem allows us to obtain provable guarantees for some of the non-monotone objectives discussed in Section~\ref{sec:examples}. 
These include, for instance, the non-monotone functions from Example~\ref{ex:nonmonotone}, which satisfy the property $\gamma_{A,B} \in [0,1]$.



Theorem~\ref{thm:local-param} also leads to interesting results for monotone objectives.
Applying it to Example~\ref{ex:proportionally-submod}
we get a $0.197$-approximation for maximizing monotone proportionally submodular functions subject to a cardinality constraint. This improves over the current best $0.168$-approximation from~\cite{borodin2014weak,borodin2015proportionally}.
Another set of applications is obtained via Example~\ref{ex:f(S)g(S)-apps}, which allows us to get several constant factor approximations for the product of set functions. For instance, for the product $f\cdot(g+h)$ where $f,g$ are monotone submodular and $h$ is a metric diversity function, our results lead to a $0.058$-approximation. For the product $f \cdot g$ where $f$ is monotone submodular and $g$ is  monotone proportionally submodular, we get a $0.046$-approximation.
We are not aware of previous work for these problems.

\section{Approximation guarantees}

In this section we present the main theoretical contribution of this work, which is to
analyze the performance of a randomized greedy algorithm on non-monotone functions (see Section~\ref{sec:non-monotone}).
We present the analysis for monotone objectives in Section~\ref{sec:monotone}.
We next describe the randomized greedy algorithm that we use in this work.

\subsection{Randomized greedy algorithm}\label{sec:randomized-greedy}
In this section, we explain the randomized greedy algorithm introduced in the work of~\cite{buchbinder2014submodular}, where they study the problem of maximizing a non-monotone submodular function subject to a cardinality constraint.
We note that this algorithm has also been used in~\cite{chen2018weakly} for the problem of maximizing a monotone weakly submodular function subject to a matroid constraint.

Given a set function $f:2^E \to \R$ over a ground set $E$, we first add a set $D$ of $2k$ dummy elements to the ground set. That is, for any set $A \subseteq E$ and $U \subseteq D$ the function satisfies $f_A(U) = 0$.
Then, for each $1 \leq i \leq k$, we take a set of $k$ elements that maximizes the sum of the marginal gains, where in case of ties we always give preference to elements from the original ground set $E$. Finally, we choose uniformly at random one of the $k$ elements, and add it to the current solution. We summarize this procedure in Algorithm~\ref{alg:random-greedy}.

\RestyleAlgo{algoruled}
\begin{algorithm}[htb]
\footnotesize
Add a set $D$ of $2k$ dummy elements to $f$.\\
Initialize: $S_0 \leftarrow \emptyset$. \\
\For{$i = 1$ to $k$} {
    Let $M_i \subseteq (E \cup D) \setminus S_{i-1}$ be a subset of size $k$ maximizing $\sum_{e \in M_i} f_{S_{i-1}}(e)$. In case of ties between dummy elements and elements from $E$, always choose the latter.\\
    Let $e_i$ be a uniformly random element from $M_i$.\\
    $S_i \leftarrow S_{i-1}+e_i$.\\
}
\Return $S_k$.
\caption{\textsf{RandomizedGreedy}$(f,k)$}\label{alg:random-greedy}
\end{algorithm}

The algorithm is quite efficient as it makes $O(nk)$ queries to the value oracle. This is the same number of queries that the standard deterministic greedy algorithm makes.
Moreover, adding $2k$ dummy elements to the original ground set guarantees the following.

\begin{observation}\label{obs:random-greedy}
	At any iteration $1\leq i \leq k$ of the \greedy algorithm the following is satisfied:
	\begin{enumerate}
		\item $|M_i|=k$.

		\item $f_{S_{i-1}} (e_i) \geq 0$, and hence  $f(S_i) \geq f(S_{i-1})$.

		\item $\sum_{e \in M_i} f_{S_{i-1}} (e) \geq  \sum_{e \in \mathrm{OPT}} f_{S_{i-1}} (e)$.
	\end{enumerate}
\end{observation}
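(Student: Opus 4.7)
The plan is to derive all three claims from a single structural fact: the $2k$ dummy elements make it possible to ``pad'' any small candidate set into a size-$k$ set, and dummies contribute nothing to the marginal sum. The key counting bound I will use repeatedly is that since $|S_{i-1}| \leq i-1 \leq k-1$ and $|D|=2k$, we have $|D \setminus S_{i-1}| \geq 2k - (k-1) = k+1$, so in particular $(E \cup D) \setminus S_{i-1}$ contains at least $k+1$ elements. This immediately gives claim~(1), since a maximizer $M_i$ of size exactly $k$ is feasible.

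For claim~(2) I would use a swap argument. Suppose some $e \in M_i$ satisfied $f_{S_{i-1}}(e) < 0$. Since $|D \setminus S_{i-1}| \geq k+1 > k = |M_i|$, there exists a dummy $d \in D$ with $d \notin S_{i-1} \cup M_i$. Replacing $e$ by $d$ in $M_i$ changes the sum $\sum_{e' \in M_i} f_{S_{i-1}}(e')$ by $-f_{S_{i-1}}(e) > 0$ (as $f_{S_{i-1}}(d)=0$), contradicting the maximality of $M_i$. Hence every element of $M_i$ has non-negative marginal gain, so in particular $f_{S_{i-1}}(e_i) \geq 0$ and consequently $f(S_i) = f(S_{i-1}) + f_{S_{i-1}}(e_i) \geq f(S_{i-1})$.

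For claim~(3), the first step is to remove the elements of $\mathrm{OPT}$ that are already in $S_{i-1}$: for any $e \in S_{i-1}$ one has $f_{S_{i-1}}(e) = 0$, so $\sum_{e \in \mathrm{OPT}} f_{S_{i-1}}(e) = \sum_{e \in \mathrm{OPT} \setminus S_{i-1}} f_{S_{i-1}}(e)$. Since $|\mathrm{OPT} \setminus S_{i-1}| \leq k$ and $|D \setminus S_{i-1}| \geq k+1$, I can pad $\mathrm{OPT} \setminus S_{i-1}$ with enough dummies to form a set $T \subseteq (E \cup D) \setminus S_{i-1}$ of size exactly $k$, and the padded sum satisfies $\sum_{e \in T} f_{S_{i-1}}(e) = \sum_{e \in \mathrm{OPT} \setminus S_{i-1}} f_{S_{i-1}}(e)$. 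Applying the optimality of $M_i$ against the competitor $T$ yields $\sum_{e \in M_i} f_{S_{i-1}}(e) \geq \sum_{e \in T} f_{S_{i-1}}(e) = \sum_{e \in \mathrm{OPT}} f_{S_{i-1}}(e)$, which is exactly the claim.

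The proof is essentially bookkeeping around the dummy set, and I do not foresee a substantive obstacle. The only subtlety worth double-checking is the tie-breaking rule used in the definition of $M_i$: it is invoked only when there are ties at value $0$ between elements of $E$ and dummies, so it plays no role in the strict-improvement contradiction in claim~(2) (which relies on $f_{S_{i-1}}(e) < 0$ strictly), and it does not interfere with the optimality comparison in claim~(3).
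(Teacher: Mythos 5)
Your proof is correct and follows the same approach as the paper: you use the abundance of dummy elements (at least $k+1$ of $D$ remain available at every step) to guarantee that size-$k$ candidate sets exist, and you pad $\mathrm{OPT}\setminus S_{i-1}$ with dummies to build a competitor set (your $T$ is exactly the paper's $\bar{M}_i$). The paper dismisses claims (1) and (2) as ``immediate from the fact that we add $2k$ dummy elements,'' whereas you spell out the cardinality count and the swap argument explicitly; the underlying reasoning is the same.
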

\begin{proof}
	The first two statements are immediate from the fact that we add $2k$ dummy elements. To see the last statement, let $\bar{M}_i$ denote a set of size $k$ containing $\mathrm{OPT} \setminus S_{i-1}$ and potentially some dummy elements (so that $|\bar{M}_i|=k$). Then, by definition of $M_i$ we have
	\[
	\sum_{e \in M_i} f_{S_{i-1}} (e) \geq \sum_{e \in \bar{M}_i} f_{S_{i-1}} (e)  =  \sum_{e \in \mathrm{OPT}} f_{S_{i-1}} (e).
	\qedhere
	\]
\end{proof}

\subsection{Analysis for monotone functions}\label{sec:monotone}

In this section, we analyze the performance of the \textsf{RandomizedGreedy} algorithm on monotone functions.
We note that we keep the term depending on the initial set $S_0$ in the approximation factor. The main reason for this is that while in many settings this will just
be the empty set, in some applications one needs to start from a non-empty initial set $S_0$  to have provable
guarantees for the parameter $\gamma_i$. (See for instance our applications for the product of set functions discussed in Section~\ref{sec:applications}.)
Then we would just run the algorithm for $k - |S_0|$ iterations.

\begin{theorem}
	\label{thm:monotone-local}
	Let $f:2^E \to \mathbb{R}_+$ be a monotone set function. Assume there are values  $0 \leq \gamma_i \leq 1$ for $i \in \{0,1,2,\ldots,k-1\}$ so that
	\[
	\sum_{e \in \mathrm{OPT}} f_{S_{i}} (e) \geq  \gamma_i \cdot f_{S_{i}} (\mathrm{OPT})
	\]
	throughout the execution of the \greedy algorithm, where $S_i$ denotes the set of chosen elements after the $ith$ iteration (i.e., $|S_i|=i$). Then at any iteration $1 \leq i \leq k$ the algorithm satisfies
	\begin{align*}
	\E[f(S_i)]  & \geq \left(1- \prod_{j=0}^{i-1} {\left(1-\frac{\gamma_j}{k} \right)} \right) \cdot f(\mathrm{OPT}) +   \prod_{j=0}^{i-1}  {\left(1-\frac{\gamma_j}{k} \right)}   \cdot \E[f(S_0)]  \\
	& \geq  \left(1- \exp \Big(-\sum_{j=0}^{i-1} \frac{\gamma_j}{k}  \Big) \right) \cdot f(\mathrm{OPT}) +   \prod_{j=0}^{i-1}  {\left(1-\frac{\gamma_j}{k} \right)}   \cdot \E[f(S_0)].
	\end{align*}
\end{theorem}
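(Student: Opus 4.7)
The plan is to prove the bound by induction on $i$, following the standard template for analyzing greedy algorithms but tracking the changing parameter $\gamma_j$ through the product. The base case $i=0$ is immediate since the empty product equals $1$, reducing the claim to $\E[f(S_0)] \geq \E[f(S_0)]$.

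For the inductive step, the key is to derive a one-step recurrence of the form
\[
\E[f(S_i)] \geq \tfrac{\gamma_{i-1}}{k}\,f(\mathrm{OPT}) + \bigl(1 - \tfrac{\gamma_{i-1}}{k}\bigr)\,\E[f(S_{i-1})].
\]
To obtain it, I would condition on $S_{i-1}$ and use that $e_i$ is drawn uniformly from $M_i$, so
\[
\E\bigl[f_{S_{i-1}}(e_i)\,\big|\,S_{i-1}\bigr] = \frac{1}{k}\sum_{e \in M_i} f_{S_{i-1}}(e) \geq \frac{1}{k}\sum_{e \in \mathrm{OPT}} f_{S_{i-1}}(e),
\]
where the inequality is the third part of Observation~\ref{obs:random-greedy}. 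Applying the local ratio hypothesis gives a lower bound of $\tfrac{\gamma_{i-1}}{k}\, f_{S_{i-1}}(\mathrm{OPT})$, and monotonicity together with non-negativity yields $f_{S_{i-1}}(\mathrm{OPT}) = f(S_{i-1}\cup \mathrm{OPT}) - f(S_{i-1}) \geq f(\mathrm{OPT}) - f(S_{i-1})$. Taking the outer expectation and rearranging produces the one-step recurrence above.

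Then I would substitute the inductive hypothesis for $\E[f(S_{i-1})]$ into the recurrence. The algebraic simplification is straightforward: the coefficient of $f(\mathrm{OPT})$ becomes $\tfrac{\gamma_{i-1}}{k} + \bigl(1 - \tfrac{\gamma_{i-1}}{k}\bigr)\bigl(1 - \prod_{j=0}^{i-2}(1-\gamma_j/k)\bigr) = 1 - \prod_{j=0}^{i-1}(1-\gamma_j/k)$, while the coefficient of $\E[f(S_0)]$ picks up the missing factor $(1-\gamma_{i-1}/k)$ and becomes $\prod_{j=0}^{i-1}(1-\gamma_j/k)$, completing the inductive step and establishing the first inequality in the theorem. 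The second inequality follows by applying $1-x \leq e^{-x}$ term-by-term to the product multiplying $f(\mathrm{OPT})$ (note that we must only weaken the $f(\mathrm{OPT})$ coefficient and leave the product multiplying $\E[f(S_0)]$ untouched, since switching to $e^{-\sum \gamma_j/k}$ there would go in the wrong direction).

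There is no real obstacle here; the argument is essentially bookkeeping once one has the right recurrence. The mildly delicate point is respecting the asymmetry between the two terms when passing to the exponential bound, and ensuring monotonicity is used in the correct place (to replace $f(S_{i-1}\cup \mathrm{OPT})$ by $f(\mathrm{OPT})$ rather than, say, by $f(S_{i-1})$). The fact that $\gamma_j \in [0,1]$ guarantees $1 - \gamma_j/k \in [0,1]$, so the coefficient of the inductive hypothesis is non-negative and the inequality propagates cleanly.
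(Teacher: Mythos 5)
Your proof is correct and takes essentially the same approach as the paper: both derive the one-step recurrence $\E[f(S_i)] \geq \tfrac{\gamma_{i-1}}{k}f(\mathrm{OPT}) + (1-\tfrac{\gamma_{i-1}}{k})\E[f(S_{i-1})]$ by conditioning on the realization, applying Observation~\ref{obs:random-greedy}, the local ratio hypothesis, and monotonicity, and then propagate it to the stated bound. The only cosmetic difference is that you phrase the propagation as induction while the paper telescopes the recurrence directly; you also correctly use monotonicity (rather than non-negativity, as the paper mislabels it) to justify $f(S_{i-1}\cup\mathrm{OPT}) \geq f(\mathrm{OPT})$, and correctly note that the $1-x\le e^{-x}$ step must be applied only to the $f(\mathrm{OPT})$ coefficient.
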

\begin{proof}
	Fix $1\leq i \leq k$ and a possible realization $S_1,S_2,\ldots,S_{i-1}$ of the algorithm  of up to iteration $i-1$.
	Then (conditioned on this event) we have
	\begin{align*}
	\E[f_{S_{i-1}} (e_i)] & = \frac{1}{k} \sum_{e \in M_i} f_{S_{i-1}} (e) \geq  \frac{1}{k} \sum_{e \in OPT} f_{S_{i-1}} (e) \geq \frac{\gamma_{i-1}}{k} f_{S_{i-1}} (OPT)  \\
	&  = \frac{\gamma_{i-1}}{k} [f(S_{i-1} \cup OPT) - f(S_{i-1})] \geq \frac{\gamma_{i-1}}{k} [f(OPT) - f(S_{i-1})],
	\end{align*}
	where the first inequality follows from Observation~\ref{obs:random-greedy}, the second inequality from the theorem's assumption, and the last inequality from non-negativity of $f$.
	We then have
	\[
	\E [f(S_i)] - f(S_{i-1}) \geq \frac{\gamma_{i-1}}{k} [f(OPT) - f(S_{i-1})],
	\]
	and rearranging the terms we get
	\[
	f(OPT) - \E [f(S_i)] \leq \Big(1-\frac{\gamma_{i-1}}{k} \Big) \Big[f(OPT) - f(S_{i-1}) \Big].
	\]
	By unfixing the realization $S_1,S_2,\ldots,S_{i-1}$ and taking expectations over all such possible realizations of the algorithm we get
	\begin{align*}
	f(OPT) - \E [f(S_i)]  &\leq  \Big(1-\frac{\gamma_{i-1}}{k} \Big) \Big[f(OPT) - \E [f(S_{i-1})] \Big] \\
	& \leq \Big(1-\frac{\gamma_{i-1}}{k} \Big)\Big(1-\frac{\gamma_{i-2}}{k} \Big) \Big[f(OPT) - \E [f(S_{i-2})] \Big]  \\
	& \leq \cdots \\
	& \leq \bigg( \prod_{j=0}^{i-1} \left(1 - \frac{\gamma_{j}}{k}\right) \bigg) [f(OPT) - \E [f(S_{0})]].
	\end{align*}
	Hence,
	\begin{align*}
	\E[f(S_i)]  & \geq \left(1- \prod_{j=0}^{i-1} {\left(1-\frac{\gamma_j}{k} \right)} \right) \cdot f(\mathrm{OPT}) +   \prod_{j=0}^{i-1}  {\left(1-\frac{\gamma_j}{k} \right)}   \cdot \E[f(S_0)]  \\
	& \geq  \left(1- \exp \Big(-\sum_{j=0}^{i-1} \frac{\gamma_j}{k}  \Big) \right) \cdot f(\mathrm{OPT}) +   \prod_{j=0}^{i-1}  {\left(1-\frac{\gamma_j}{k} \right)}   \cdot \E[f(S_0)],
	\end{align*}
	where the last inequality uses that $1-x \leq e^{-x}$ for all $ x \geq 0$.
\end{proof}

The above result now proves the first part of Theorem~\ref{thm:local-param}. This follows because by monotonicity of $f$ we have $f_{S_i} (\mathrm{OPT}) \geq 0$, and hence $\min\{ \gamma_i \cdot f_{S_i} (\mathrm{OPT}),  f_{S_i} (\mathrm{OPT})\} = \gamma_i \cdot f_{S_i} (\mathrm{OPT})$.

\subsection{Analysis for non-monotone functions}\label{sec:non-monotone}
In this section we analyze the performance of the \textsf{RandomizedGreedy} algorithm on non-monotone functions.
As mentioned in Section~\ref{sec:contributions}, a key ingredient for analyzing the non-monotone case is to bound the term  $\E[f(S_i \cup \mathrm{OPT})]$ from below with respect to $f(\mathrm{OPT})$. For monotone objectives this is trivial, since by monotonicity we always have $f(S_i \cup \mathrm{OPT}) \geq f(\mathrm{OPT})$. The techniques used in~\cite{buchbinder2014submodular} for analyzing \greedy with respect to submodular functions make use of the diminishing returns property (see their Lemma 2.2 and Observation 1). However, it is not clear how to extend those techniques for non-monotone weakly submodular functions, since it is not obvious whether they satisfy some type of approximate diminishing returns property $f_A (e) \geq \beta f_B (e)$.
Our next result circumvents this issue by analyzing the quantity $\E[f(S_i \cup \mathrm{OPT})]$ directly with respect to the execution of the algorithm.

\begin{lemma}\label{lem:semi1}
	Let $f$ be a non-negative set function.
	Assume there are numbers $0 \leq \bar{\alpha}_i \leq \bar{\beta}_i \leq k$ such that
	\[
	\sum_{u \in M_i} f_{S_{i-1} \cup \mathrm{OPT}} (u) \geq \min\{  \bar{\alpha}_i \cdot f_{S_{i-1} \cup \mathrm{OPT}} (M_i), \bar{\beta}_i \cdot f_{S_{i-1} \cup \mathrm{OPT}} (M_i) \}
	\]
	is satisfied for any choice of $M_i$ and $S_{i-1}$ throughout the execution of the \greedy algorithm.
	Then at any iteration $1 \leq i \leq k$ the algorithm satisfies
	$
	\E[f(S_i \cup \mathrm{OPT})] \geq \prod_{j=1}^i (1-\bar{\beta}_j / k ) \cdot f(\mathrm{OPT}).
	$
\end{lemma}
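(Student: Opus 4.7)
The plan is to prove the bound by induction on $i$. The key step is to establish the conditional one-step inequality
\[
\E\bigl[f(S_i \cup \mathrm{OPT}) \,\bigm|\, S_0,\ldots,S_{i-1}\bigr] \;\geq\; (1 - \bar{\beta}_i/k)\cdot f(S_{i-1} \cup \mathrm{OPT}),
\]
since once this is in hand, taking expectations and iterating down to the base case $f(S_0 \cup \mathrm{OPT}) = f(\mathrm{OPT})$ (recall $S_0 = \emptyset$ in Algorithm~\ref{alg:random-greedy}) immediately gives the stated product bound.

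To prove the one-step inequality, fix a realization $S_0, \ldots, S_{i-1}$; then $M_i$ is determined by the algorithm and $e_i$ is drawn uniformly at random from $M_i$. Writing $T := S_{i-1} \cup \mathrm{OPT}$, the identity $f(S_i \cup \mathrm{OPT}) = f(T \cup \{e_i\}) = f(T) + f_T(e_i)$ combined with uniform sampling yields
\[
\E\bigl[f(S_i \cup \mathrm{OPT}) \,\bigm|\, S_0,\ldots,S_{i-1}\bigr] = f(T) + \frac{1}{k}\sum_{e \in M_i} f_T(e).
\]
So the task reduces to showing $\sum_{e \in M_i} f_T(e) \geq -\bar{\beta}_i\, f(T)$.

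For this last bound, I would split on the sign of the marginal $f_T(M_i) = f(T \cup M_i) - f(T)$. If $f_T(M_i) \geq 0$, then since $\bar{\alpha}_i \leq \bar{\beta}_i$ the $\min$ in the hypothesis is $\bar{\alpha}_i\, f_T(M_i) \geq 0$, which is trivially at least $-\bar{\beta}_i f(T)$ by non-negativity of $f$ and $\bar{\beta}_i \geq 0$. If instead $f_T(M_i) < 0$, the $\min$ becomes $\bar{\beta}_i\, f_T(M_i) = \bar{\beta}_i[f(T \cup M_i) - f(T)]$, and non-negativity of $f(T \cup M_i)$ gives a lower bound of $-\bar{\beta}_i f(T)$. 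The main obstacle is precisely this case split: the $\min$ in the hypothesis is tailor-made so that some useful bound holds regardless of the sign of $f_T(M_i)$, and the parameter $\bar{\beta}_i$ (rather than $\bar{\alpha}_i$) appears in the final estimate because it governs the adverse-sign case. Chaining the one-step inequality across $i$ iterations then completes the induction, bypassing the need for any approximate diminishing-returns inequality of the form $f_A(e) \geq \beta f_B(e)$ exploited in~\cite{buchbinder2014submodular}.
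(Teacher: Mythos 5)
Your proposal is correct and follows essentially the same route as the paper's proof: condition on the history, expand $\E[f(S_i \cup \mathrm{OPT})]$ via the uniform choice of $e_i \in M_i$, split on the sign of $f_{S_{i-1}\cup\mathrm{OPT}}(M_i)$ (using $\bar\alpha_i \le \bar\beta_i$ and non-negativity of $f$ to resolve the $\min$ in each case), and telescope down to $S_0 = \emptyset$. The only difference is cosmetic — you isolate the intermediate estimate $\sum_{e\in M_i} f_T(e) \ge -\bar\beta_i f(T)$ before substituting, whereas the paper bounds the conditional expectation directly in each case — but the underlying argument is identical.
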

\begin{proof}
	Fix $1\leq i \leq k$ and an event $S_1,S_2,\ldots,S_{i-1}$ of a possible path of the algorithm up to iteration $i-1$. Then (conditioned on this event) we have
	\begin{align*}
	& \E[f(S_i \cup \mathrm{OPT})] \\
	& = f(S_{i-1} \cup \mathrm{OPT}) + \E[f_{S_{i-1} \cup \mathrm{OPT}} (u_i)]
	= f(S_{i-1} \cup \mathrm{OPT}) + \frac{1}{k} \sum_{u \in M_i} f_{S_{i-1} \cup \mathrm{OPT}} (u)\\
	&\geq  f(S_{i-1} \cup \mathrm{OPT}) + \frac{1}{k} \min\{  \bar{\alpha}_i \cdot f_{S_{i-1} \cup \mathrm{OPT}} (M_i), \bar{\beta}_i \cdot f_{S_{i-1} \cup \mathrm{OPT}} (M_i) \}.
	\end{align*}
	We now consider separately the cases where the marginal gain $f_{S_{i-1} \cup OPT} (M_i)$ is either negative or non-negative. If it is non-negative, using that $0 \leq \bar{\alpha}_i \leq \bar{\beta}_i$ we get
	\begin{equation*}
	\E[f(S_i \cup \mathrm{OPT})] \geq  f(S_{i-1} \cup \mathrm{OPT}) + \frac{\bar{\alpha}_i}{k}  f_{S_{i-1} \cup OPT} (M_i) \geq f(S_{i-1} \cup \mathrm{OPT}).
	\end{equation*}
	If on the other hand, the marginal gain is negative, then
	\begin{align*}
	\E[f(S_i \cup \mathrm{OPT})] &\geq f(S_{i-1} \cup \mathrm{OPT}) + \frac{\bar{\beta}_i}{k} f_{S_{i-1} \cup \mathrm{OPT}} (M_i)\\
	&= f(S_{i-1} \cup \mathrm{OPT}) + \frac{\bar{\beta}_i}{k} [f(S_{i-1} \cup \mathrm{OPT} \cup M_i) - f(S_{i-1} \cup \mathrm{OPT})]\\
	&\geq f(S_{i-1} \cup \mathrm{OPT}) - \frac{\bar{\beta}_i}{k} f(S_{i-1} \cup \mathrm{OPT})
	= \Big[1- \frac{\bar{\beta}_i}{k} \Big] f(S_{i-1} \cup \mathrm{OPT}),
	\end{align*}
	where the last inequality follows from non-negativity. Thus, for each possible fixed realization $S_1,S_2,\ldots,S_{i-1}$ of the algorithm up to iteration $i-1$ we have
	\begin{equation*}
	\E[f(S_i \cup \mathrm{OPT})] \geq \Big[1- \frac{\bar{\beta}_i}{k} \Big] f(S_{i-1} \cup \mathrm{OPT}).
	\end{equation*}

	By unconditioning on the event $S_1,S_2,\ldots,S_{i-1}$, and taking the expectation over all such possible events we get:
	\begin{align*}
	\E[f(S_i \cup \mathrm{OPT})] & \geq \Big[1- \frac{\bar{\beta}_i}{k} \Big] \E[f(S_{i-1} \cup \mathrm{OPT})]
	\geq \Big[1-\frac{\bar{\beta}_i}{k} \Big] \Big[1- \frac{\bar{\beta}_{i-1}}{k} \Big]   \E[f(S_{i-2} \cup \mathrm{OPT})] \\
	&\geq \cdots \geq
	\prod_{j=1}^i \Big[1- \frac{\bar{\beta}_j}{k} \Big] \E[f(S_{0} \cup \mathrm{OPT})] = \prod_{j=1}^i \Big[1- \frac{\bar{\beta}_j}{k}  \Big] f(\mathrm{OPT}).
	\qedhere
	\end{align*}
\end{proof}

For submodular functions the above result becomes $\E[f(S_i \cup \mathrm{OPT})] \geq {(1-1/k)}^i \cdot f(\mathrm{OPT})$, since  we can take $\bar{\alpha}_i = \bar{\beta}_i =1$ for all $i$. We remark that this
matches the bound provided in~\cite{buchbinder2014submodular} for submodular functions (see their Observation 1). 
We now prove our main result.

\begin{theorem}\label{thm:non-monot-local}
	Let $f:2^E \to \mathbb{R}_+$ be a set function.
	Assume there are values $0 \leq \bar{\alpha}_i \leq \bar{\beta}_i \leq k$ and $0 \leq {\alpha}_i \leq {\beta}_i \leq k$ such that
	\[
	\sum_{u \in M_i} f_{S_{i-1} \cup \mathrm{OPT}} (u) \geq  \min\{  \bar{\alpha}_i \cdot f_{S_{i-1} \cup \mathrm{OPT}} (M_i), \bar{\beta}_i \cdot f_{S_{i-1} \cup \mathrm{OPT}} (M_i) \}
	\]
	and
	\[
	\sum_{e \in OPT} f_{S_{i-1}} (e) \geq \min\{  \alpha_{i-1} \cdot f_{S_{i-1}} (\mathrm{OPT}), \beta_{i-1} \cdot f_{S_{i-1}} (\mathrm{OPT}) \}
	\]
	is satisfied for any choice of $M_i$ and $S_{i-1}$ throughout the execution of the \greedy algorithm.
	Then at any iteration $1 \leq i \leq k$ the algorithm satisfies
	\[
	\E[f(S_i)] \geq  \left( \prod_{j=1}^{i-1} \min \Big\{1-\frac{\bar{\beta}_j}{k}, 1-\frac{\alpha_j}{k} \Big\}  \right) \cdot \Big( 	 \sum_{j=0}^{i-1} \frac{\alpha_j}{k}    \Big) \cdot  f(\mathrm{OPT}).
	\]
\end{theorem}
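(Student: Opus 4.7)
The plan is to imitate the monotone analysis of Theorem~\ref{thm:monotone-local}, but with an extra trick to handle the regime where $f_{S_{i-1}}(\mathrm{OPT})$ becomes negative. Fixing a realization $S_0, S_1, \dots, S_{i-1}$ and taking the conditional expectation over the random choice of $e_i$, I would write
\[
\E[f(S_i) \mid S_{i-1}] \;=\; f(S_{i-1}) + \tfrac{1}{k}\sum_{e\in M_i} f_{S_{i-1}}(e) \;\geq\; f(S_{i-1}) + \tfrac{1}{k}\sum_{e\in\mathrm{OPT}} f_{S_{i-1}}(e),
\]
using Observation~\ref{obs:random-greedy}(3). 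The second hypothesis of the theorem then lower-bounds the right-hand side by $f(S_{i-1}) + (1/k)\min\{\alpha_{i-1}f_{S_{i-1}}(\mathrm{OPT}),\,\beta_{i-1}f_{S_{i-1}}(\mathrm{OPT})\}$.

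The crucial step (and the main obstacle) is to unify the two sub-cases of this min into a single recursion driven only by $\alpha_{i-1}$. When $f_{S_{i-1}}(\mathrm{OPT}) \geq 0$ the min equals $\alpha_{i-1} f_{S_{i-1}}(\mathrm{OPT})$ and rearranging immediately gives
\[
\E[f(S_i) \mid S_{i-1}] \;\geq\; (1 - \alpha_{i-1}/k)\,f(S_{i-1}) + (\alpha_{i-1}/k)\,f(S_{i-1}\cup \mathrm{OPT}).
\]
When $f_{S_{i-1}}(\mathrm{OPT}) < 0$ the min equals $\beta_{i-1} f_{S_{i-1}}(\mathrm{OPT})$, and using it directly would weaken the recursion. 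Instead I would bypass weak submodularity and invoke Observation~\ref{obs:random-greedy}(2) to get $\E[f(S_i)\mid S_{i-1}] \geq f(S_{i-1})$; writing $f(S_{i-1}) = (1-\alpha_{i-1}/k)f(S_{i-1}) + (\alpha_{i-1}/k)f(S_{i-1})$ and using $f(S_{i-1}) \geq f(S_{i-1} \cup \mathrm{OPT})$ (with $\alpha_{i-1}/k \geq 0$) recovers exactly the same inequality. Unconditioning then yields the uniform recursion
\[
\E[f(S_i)] \;\geq\; (1-\alpha_{i-1}/k)\,\E[f(S_{i-1})] + (\alpha_{i-1}/k)\,\E[f(S_{i-1}\cup\mathrm{OPT})].
\]

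To finish, I would substitute $\E[f(S_{i-1}\cup\mathrm{OPT})] \geq \prod_{j=1}^{i-1}(1 - \bar{\beta}_j/k)\,f(\mathrm{OPT})$ from Lemma~\ref{lem:semi1} and unroll the resulting linear recursion in $\E[f(S_i)]$, dropping the non-negative contribution from $\E[f(S_0)]$. The output is a sum whose $m$-th summand is $\frac{\alpha_m}{k}\,\prod_{j=1}^{m}(1-\bar{\beta}_j/k)\,\prod_{j=m+1}^{i-1}(1-\alpha_j/k)\cdot f(\mathrm{OPT})$ for $m \in \{0,1,\dots,i-1\}$. The last step is purely mechanical: for each fixed $j \in \{1,\dots,i-1\}$, every summand contains exactly one of the factors $(1 - \bar{\beta}_j/k)$ or $(1-\alpha_j/k)$, each of which is at least $\min\{1-\bar{\beta}_j/k,\,1-\alpha_j/k\}$; pulling $\prod_{j=1}^{i-1}\min\{1-\bar{\beta}_j/k,\,1-\alpha_j/k\}$ outside the sum over $m$ yields precisely the claimed bound.
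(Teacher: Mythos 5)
Your proposal is correct and follows essentially the same path as the paper: the same case analysis on the sign of $f_{S_{i-1}}(\mathrm{OPT})$ (with the negative case salvaged via Observation~\ref{obs:random-greedy}(2)), the same recursion, and the same appeal to Lemma~\ref{lem:semi1}. The only difference is presentational — you unroll the linear recursion into an explicit sum and then apply the per-index $\min$ bound, whereas the paper folds the same $\min$ argument into an induction step — but the underlying inequalities are identical.
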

\begin{proof}
	Fix $1\leq i \leq k$ and an event $S_1,S_2,\ldots,S_{i-1}$ of a possible realization of the algorithm up to iteration $i-1$. Then (conditioned on this event) we have
	\begin{align*}
	\E[f_{S_{i-1}} (e_i)] & = \frac{1}{k} \sum_{e \in M_i} f_{S_{i-1}} (e) \geq  \frac{1}{k} \sum_{e \in OPT} f_{S_{i-1}} (e) \\
	&\geq \frac{1}{k} \min\{  \alpha_{i-1} \cdot f_{S_{i-1}} (\mathrm{OPT}), \beta_{i-1} \cdot f_{S_{i-1}} (\mathrm{OPT}) \},
	\end{align*}
	where the first inequality follows from Observation~\ref{obs:random-greedy}, and the second inequality from the theorem's assumption.
	We now consider separately the cases where the marginal gain $f_{S_{i-1}} (\mathrm{OPT})$ is either negative or non-negative. If it is non-negative,  using that $0 \leq {\alpha}_i \leq {\beta}_i$ we get 
	\[
	\E[f_{S_{i-1}} (e_i)] \geq \frac{\alpha_{i-1}}{k}  \cdot f_{S_{i-1}} (\mathrm{OPT}).
	\]
	If it is negative we get
	\[
	\E[f_{S_{i-1}} (e_i)] \geq 0 \geq \frac{\alpha_{i-1}}{k} \cdot f_{S_{i-1}} (\mathrm{OPT}),
	\]
	where the first inequality follows from  Observation~\ref{obs:random-greedy}.  It then follows that for any fixed possible realization $S_1,S_2,\ldots,S_{i-1}$ of the algorithm of up to iteration $i-1$ we have
	\begin{equation}
	\label{eq:greedy-bound}
	\E[f_{S_{i-1}} (e_i)] \geq \frac{\alpha_{i-1}}{k} \cdot f_{S_{i-1}} (\mathrm{OPT}) = \frac{\alpha_{i-1}}{k} [f(S_{i-1} \cup \mathrm{OPT}) - f(S_{i-1})].
	\end{equation}

	We now unfix the realization $S_1,S_2,\ldots,S_{i-1}$ and take expectations over all such possible realizations of the algorithm.
	\begin{align}
	\label{eq:thm}
	\E[f(S_i)] &= \E[f(S_{i-1}) + f_{S_{i-1}} (e_i)] = \E[f(S_{i-1})] + \E[f_{S_{i-1}} (e_i)]  \nonumber \\ &
	\geq \E[f(S_{i-1})] + \frac{\alpha_{i-1}}{k} \E[f(S_{i-1} \cup \mathrm{OPT}) - f(S_{i-1})]   \nonumber \\
	& = \Big[1-\frac{\alpha_{i-1}}{k}\Big]  \E[f(S_{i-1})] + \frac{\alpha_{i-1}}{k} \E[f(S_{i-1} \cup \mathrm{OPT})]  \nonumber  \\ &
	\geq \Big[1-\frac{\alpha_{i-1}}{k}\Big]  \E[f(S_{i-1})] + \frac{\alpha_{i-1}}{k} \prod_{j=1}^{i-1} \Big[1-\frac{\bar{\beta}_j}{k}\Big]  f(\mathrm{OPT}),
	\end{align}
	where the first inequality follows from Equation~\eqref{eq:greedy-bound} and the last inequality follows from Lemma~\ref{lem:semi1} (which we can use due to the theorem's assumptions).

	We are now ready to prove the statement of the theorem using induction on the value of $1 \leq i \leq k$.
	The base case $i=1$ claims that $\E[f(S_1)] \geq (\alpha_0 / k) \cdot f(\mathrm{OPT})$. This follows from Equation \eqref{eq:thm} by setting $i=1$ and using that $f(S_0) = f(\emptyset) \geq 0$.

	Now let $1<i \leq k$ be arbitrary, and assume that the claim is true for all values $1 \leq i'<i$; we show it is also true for $i$. Using Equation~\eqref{eq:thm} and the induction hypothesis we get
	\begin{align*}
	& \E[f(S_i)] \geq \Big[1-\frac{\alpha_{i-1}}{k}\Big]  \E[f(S_{i-1})] + \frac{\alpha_{i-1}}{k} \prod_{j=1}^{i-1} \Big[1-\frac{\bar{\beta}_j}{k}\Big]  f(\mathrm{OPT})\\
	&\geq  \Bigg[  \Big[1-\frac{\alpha_{i-1}}{k}\Big]  \left( \prod_{j=1}^{i-2} \min \Big\{1-\frac{\bar{\beta}_j}{k}, 1-\frac{\alpha_j}{k} \Big\}  \right)    \cdot \Big( \sum_{j=0}^{i-2} \frac{\alpha_j}{k}  \Big)     + \frac{\alpha_{i-1}}{k} \prod_{j=1}^{i-1} \Big[1- \frac{\bar{\beta}_j}{k} \Big] \Bigg] f(\mathrm{OPT}) \\
	&  \geq \left( \prod_{j=1}^{i-1} \min \Big\{1-\frac{\bar{\beta}_j}{k}, 1-\frac{\alpha_j}{k} \Big\}  \right)  \cdot \Big( \big( \sum_{j=0}^{i-2} \frac{\alpha_j}{k}  \big) + \frac{\alpha_{i-1}}{k}  \Big) \cdot  f(\mathrm{OPT}) \\
	&=  \left( \prod_{j=1}^{i-1} \min \Big\{1-\frac{\bar{\beta}_j}{k}, 1-\frac{\alpha_j}{k} \Big\}  \right) \cdot \Big( 	 \sum_{j=0}^{i-1} \frac{\alpha_j}{k}    \Big) \cdot  f(\mathrm{OPT}).
	\qedhere
	\end{align*}
\end{proof}

\subsubsection{Proof of Theorems~\ref{thm:non-monot-global} and \ref{thm:local-param}}
The above result leads to several interesting consequences by choosing the values of the parameters $\alpha_i, \bar{\alpha}_i, \beta_i , \bar{\beta}_i$ appropriately. For instance, for non-monotone $\gamma$-weakly submodular functions we have $\alpha_i = \bar{\alpha}_i = \gamma$ and $\beta_i = \bar{\beta}_i = 1/ \gamma$ for all $i$. Hence we immediately get an approximation of $\gamma  {(1-1/\gamma k)}^{k-1}$, which is asymptotically  $\gamma e^{-1/ \gamma}$ as $k \to \infty$. In a similar fashion, for $\gamma$-pseudo submodular functions we can take $\alpha_i = \bar{\alpha}_i = \beta_i = \bar{\beta}_i = \gamma$ for all $i$, leading to an approximation factor of $\gamma  {(1-\gamma / k)}^{k-1} \geq \gamma e^{-\gamma}$. This now proves Theorem~\ref{thm:non-monot-global}.

One can now also prove the second part of  Theorem~\ref{thm:local-param} as follows. First, if the function has a parameter that satisfies $0 \leq \gamma_{A,B} \leq 1$ (such as in Example~\ref{ex:nonmonotone}), we immediately get that $\alpha_i, \bar{\alpha}_i, \beta_i , \bar{\beta}_i \leq \max_{A \cap B = \emptyset} \gamma_{A,B} \leq 1$. 
Hence $\prod_{j=1}^{k-1} \min \{1-\alpha_j/k, 1-\bar{\beta}_j / k \}  \geq {[1-1/k]}^{k-1} \geq 1/e$.
In addition, using the assumptions from Theorem~\ref{thm:local-param} one can take $\alpha_i = \gamma_i$ and $\beta_i = 1$ in Theorem~\ref{thm:non-monot-local};
leading to an approximation factor of $(1/ek) \cdot  \sum_{j=0}^{k-1} \gamma_j$ as desired.

Theorem~\ref{thm:non-monot-local} becomes particularly useful to prove tighter guarantees for some of the examples discussed in Section~\ref{sec:examples}, which have a parameter $\gamma_{A,B}$ that changes throughout the algorithm. We discuss this and applications for monotone objectives in the next section.

\section{Applications}\label{sec:applications}

We now present some applications for our results. We discuss the monotone case first.

For monotone functions it is clear that the \greedy algorithm always selects $k$ elements from the original ground set $E$ (i.e., it never chooses dummy elements). In particular, the current solution $S_i$ at iteration $i$ always has $i$ elements from $E$, while  $OPT \setminus S_i$ is a set containing at most $k$ elements. We can use this, together with the results from Section~\ref{sec:examples}, to compute a lower bound for a parameter $\gamma_i \geq 0$ that satisfies
$
\sum_{e \in \mathrm{OPT}} f_{S_i} (e) \geq  \gamma_i \cdot f_{S_i} (\mathrm{OPT}).
$
For instance, one can take
\vspace*{-0.1cm}
\[
\gamma_i = \min_{|A|=i, \ 1\leq |B| \leq k, \  A \cap B = \emptyset} \gamma_{A,B}.
\]
We then immediately get a provable approximation ratio of at least $1- \exp (-\frac{1}{k}\sum_{i=0}^{k-1} \gamma_i)$ via Theorem~\ref{thm:local-param} (or Theorem~\ref{thm:monotone-local}). 

For monotone proportionally submodular functions, Example~\ref{ex:proportionally-submod} gives a bound of $\gamma_{A,B} \geq \frac{3 a (1 + a)}{3 a^2 + 3 a b + b^2 - 1}$ where $a=|A|$ and $b=|B|$. Hence $\gamma_i \geq \frac{3 i (1 + i)}{3 i^2 + 3 i k + k^2 - 1}$ for $i\in \{0,1,\ldots,k-1\}$.
By plugging this into Theorem~\ref{thm:local-param} we get an expression that does not seem to have a closed form, but that numerically converges from above to $0.197$.
This improves over the approximation factor of $0.168$ given in~\cite{borodin2015proportionally} for the same problem (they give it as a $5.95$-approximation since they express approximation factors as numbers greater than $1$).

\begin{theorem}
\label{thm:prop-submod}
	There is an efficient $0.197$-approximation for the problem of maximizing a non-negative monotone proportionally submodular function subject to a cardinality constraint.
\end{theorem}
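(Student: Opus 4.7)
The plan is to run the \greedy algorithm on $f$ and invoke Theorem~\ref{thm:local-param} with the local submodularity ratios read off from Example~\ref{ex:proportionally-submod}. Since $f$ is monotone the algorithm will never select a dummy element, so after iteration $i$ the partial solution $S_i$ consists of exactly $i$ elements of $E$ and $|\mathrm{OPT}\setminus S_i|\leq k$. The first step is to observe that the lower bound $\frac{3a(1+a)}{3a^2+3ab+b^2-1}$ from Example~\ref{ex:proportionally-submod} is strictly decreasing in $b$ when $a$ is fixed (the numerator is constant, and the denominator has positive derivative $3a+2b$), so the worst case over sets $B=\mathrm{OPT}\setminus S_i$ of size at most $k$ is attained at $b=k$. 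Setting $a=i$ then yields the uniform lower bound
\[
\gamma_i \;\geq\; \frac{3\,i\,(1+i)}{3 i^2 + 3 i k + k^2 - 1} \;\in\; [0,1]
\]
for every $i\in\{0,1,\ldots,k-1\}$, and Theorem~\ref{thm:local-param} produces an expected approximation ratio of at least $1-\exp\!\bigl(-\tfrac{1}{k}\sum_{i=0}^{k-1}\gamma_i\bigr)$.

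The remaining work is purely quantitative. Substituting $x=i/k$ turns $\tfrac{1}{k}\sum_{i=0}^{k-1}\gamma_i$ into a Riemann sum for $\int_0^1 \frac{3x^2}{3x^2+3x+1}\,dx$, which can be evaluated in closed form: write $\tfrac{3x^2}{3x^2+3x+1}=1-\tfrac{3x+1}{3x^2+3x+1}$ and split the second term into a logarithmic piece (via $\tfrac{d}{dx}(3x^2+3x+1)=6x+3$) and an arctangent piece (after completing the square $3x^2+3x+1=3(x+\tfrac12)^2+\tfrac14$). Its value is approximately $0.220$, so the limiting approximation factor is $1-e^{-0.220}\approx 0.197$.

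The main obstacle, as the paper itself flags, is quantitative rather than conceptual: one must certify that $0.197$ is a valid guarantee \emph{for every} $k$ and not only asymptotically. The cleanest route I would pursue is to show that $k\mapsto\tfrac{1}{k}\sum_{i=0}^{k-1}\gamma_i^{(k)}$ converges to the integral \emph{monotonically from above}, after which $1-\exp(-\cdot)$ inherits the same monotonicity; any residual small-$k$ values can be handled by a direct numerical check. I expect this to be routine but slightly tedious.

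Finally, it is worth pointing out that the improvement over the previously known $0.168$-approximation of~\cite{borodin2015proportionally} is entirely due to exploiting the \emph{local} parameter $\gamma_{A,B}$ rather than a single global $\gamma$: the bound from Example~\ref{ex:proportionally-submod} tends to $0$ when $a$ is small and $b$ is close to $k$, so any global guarantee is very weak; what rescues the analysis is the fact that $\gamma_i$ \emph{heals} as $i$ grows, and that averaging the healing parameters along the trajectory of the algorithm, as permitted by Theorem~\ref{thm:local-param}, is what yields the $0.197$ constant.
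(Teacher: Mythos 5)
Your proposal is correct and follows essentially the same route as the paper: lower-bound $\gamma_i$ by $\frac{3i(1+i)}{3i^2+3ik+k^2-1}$ via Example~\ref{ex:proportionally-submod} (taking the worst case $b=k$), plug into Theorem~\ref{thm:local-param}, and observe the resulting $1-\exp(-\frac{1}{k}\sum\gamma_i)$ converges from above to $\approx 0.197$. The only difference is presentational — you work out the limiting Riemann integral $\int_0^1 \frac{3x^2}{3x^2+3x+1}\,dx \approx 0.220$ in closed form, whereas the paper merely asserts numerical convergence from above — but the argument and constant are identical.
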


Our next application is for the product of monotone set functions. First, let us consider the case $f \cdot g$ where $f$ is submodular and $g$ is either submodular, metric diversity, or proportionally submodular. Example~\ref{ex:f(S)g(S)-apps} provides explicit bounds for the parameter $\gamma_{A,B}$ of these product functions. We have $ \gamma_{A,B} \geq ( f(A) / f(A \cup B) ) \cdot \gamma^g_{A,B}$ where the latter term denotes the parameter of the function $g$. Hence, we need to lower bound the term $f(A) / f(A \cup B)$. We can do this as follows. 
One can show that for submodular functions, if there is a set $S_f$ satisfying
$f(S_f) \geq \alpha \cdot \max_{|S| \leq k} f(S)$ then $ f(A) / f(A \cup B) \geq \alpha / (1+\alpha)$ for any set $A \supseteq S_f$ and any set $B$ of size at most $k$ (see Claim~\ref{claim:submodular-product} in the Appendix).
We can then take $S_0 = S_f$ as the initial set and run the \greedy algorithm during $k - |S_0|$ iterations (to get a set of size $k$), with a guarantee that the parameter of the product function satisfies $\gamma_{A,B} \geq \alpha \cdot \gamma^g_{A,B}$. This leads to approximation guarantees of $1- \exp (-\frac{1}{k}\sum_{i=k/2}^{k-1} \alpha \cdot \gamma^g_i)$, where $\gamma^g_i$ denotes the parameter $\gamma_i$ of the function $g$.

For submodular functions, we can run the standard greedy algorithm on $f$ during $k/2$ iterations to find a set $S_f \subseteq E$ of size $k/2$ satisfying $f(S_f) \geq (1-e^{-1/2}) \cdot \max_{|S| \leq k} f(S)$. Combining this with the fact that submodular functions have $\gamma_i \geq 1$, the sum of submodular and metric diversity has $\gamma_i \geq \frac{i}{i+k-1}$, and proportionally submodular functions have $\gamma_i \geq \frac{3 i (1 + i)}{3 i^2 + 3 i k + k^2 - 1}$ for $i\in \{0,1,\ldots,k-1\}$, one can obtain the following approximation guarantees.

\begin{theorem}\label{thm:app-product}
	Let $f,g$ and $h$ be non-negative and monotone. If $f$ is submodular, then:
	\begin{itemize}
		\item There is an approximation (on expectation) of $0.131$ for $f \cdot g$ when $g$ is submodular.

		\item There is an approximation (on expectation) of $0.058$ for $f \cdot (g + h)$  when $g$ is a metric diversity function and $h$ is submodular.

		 \item There is an approximation (on expectation) of $0.046$ for $f \cdot g$ when $g$ is proportionally submodular.
	\end{itemize}
\end{theorem}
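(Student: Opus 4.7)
My plan is to apply Theorem~\ref{thm:monotone-local} to each product objective after a warm-start phase that guarantees a uniform lower bound on the ratio $f(A)/f(A\cup B)$ throughout the run of \greedy. Specifically, in Phase~1 I would run the standard deterministic greedy algorithm on $f$ for $k/2$ iterations, producing a set $S_f$ of size $k/2$ with $f(S_f) \ge \alpha \cdot f^*$, where $f^* := \max_{|S|\le k} f(S)$ and $\alpha := 1-e^{-1/2}$ (classical analysis: after $t$ iterations greedy on a monotone submodular function achieves $1-(1-1/k)^t \ge 1-e^{-t/k}$ of the cardinality-$k$ optimum). In Phase~2 I would run \greedy on the monotone product objective $h$ of interest with initial set $S_0 := S_f$ for $k/2$ additional iterations. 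Because $h$ is non-negative and monotone, \greedy never selects a dummy element, so $S_f \subseteq S_i \subseteq E$ at every iteration $i$.

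The key auxiliary step is to lower bound $f(A)/f(A\cup B)$ uniformly during Phase~2. For any $A \supseteq S_f$ and any $B$ disjoint from $A$ with $|B|\le k$, monotonicity of $f$ gives $f(A) \ge f(S_f) \ge \alpha f^*$, while subadditivity of non-negative submodular functions gives $f(A\cup B) \le f(A) + f(B) \le f(A) + f^* \le f(A) + f(A)/\alpha = f(A)(1+\alpha)/\alpha$. Hence $f(A)/f(A\cup B) \ge \alpha/(1+\alpha)$; this is what is formalised as Claim~\ref{claim:submodular-product} in the appendix. Combined with Example~\ref{ex:f(S)g(S)} (or~\ref{ex:f(S)g(S)-apps}) and the fact that for submodular $f$ the parameter $\gamma^f_{A,B}\ge 1$ dominates the second factor's parameter in all three scenarios, this yields $\gamma_{A,B} \ge \frac{\alpha}{1+\alpha}\cdot \gamma^g_{A,B}$ throughout Phase~2.

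Substituting the explicit lower bounds on $\gamma^g_i$ coming from the submodular case, Example~\ref{ex:nonmonotone}, and Example~\ref{ex:proportionally-submod} (using the worst case $|B|=k$), at iteration $i\in\{k/2,\dots,k-1\}$ of Phase~2 we may take $\gamma^g_i \ge 1$, $\gamma^g_i \ge i/(i+k-1)$, and $\gamma^g_i \ge 3i(1+i)/(3i^2+3ik+k^2-1)$ respectively. Theorem~\ref{thm:monotone-local} then delivers an expected approximation ratio of at least
\[
1 - \exp\!\left(- \frac{\alpha}{k(1+\alpha)} \sum_{i=k/2}^{k-1} \gamma^g_i\right)
\]
against $\max_{|S|\le k} h(S)$. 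As $k\to\infty$ the three Riemann sums converge to $1/2$, $\int_{1/2}^{1}\frac{x}{x+1}\,dx = 1/2-\ln(4/3)$, and $\int_{1/2}^{1}\frac{3x^2}{3x^2+3x+1}\,dx$ respectively, and together with $\alpha/(1+\alpha) = (1-e^{-1/2})/(2-e^{-1/2}) \approx 0.282$ these evaluate to the stated $0.131$, $0.058$, and $0.046$.

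The only conceptually nontrivial step is Claim~\ref{claim:submodular-product}: without a warm start, $f(A)/f(A\cup B)$ is uncontrolled when $A$ is small and the Example~\ref{ex:f(S)g(S)} bound for the product degenerates. Once that claim is in hand the remainder is bookkeeping: composing the classical Phase~1 greedy guarantee with Theorem~\ref{thm:monotone-local} for Phase~2 (verifying in particular that $|\OPT|\le k$ in $h$'s cardinality-constrained maximization, which is what makes $|B|\le k$ legitimate in the subadditivity step), and extracting the constants from routine Riemann-sum asymptotics.
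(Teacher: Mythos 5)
Your proposal is correct and follows essentially the same route as the paper: a greedy warm-start on $f$ for $k/2$ iterations to obtain $S_f$ with $f(S_f) \ge (1-e^{-1/2})\max_{|S|\le k}f(S)$, then Claim~\ref{claim:submodular-product} to control $f(A)/f(A\cup B)$, then running \greedy from $S_0=S_f$ and invoking Theorem~\ref{thm:monotone-local} with the $\gamma^g_i$ bounds from Examples~\ref{ex:f(S)g(S)-apps} and~\ref{ex:nonmonotone}, and finally extracting the constants via the Riemann-sum limits you compute. Your reading of the (slightly overloaded) ``$\alpha$'' in the paper's sketch as $\alpha/(1+\alpha)=(1-e^{-1/2})/(2-e^{-1/2})$ matches the intended computation and reproduces the stated constants.
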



We are not aware of previous work for the product of set functions that we can compare our results to. However, when the functions are monotone, a natural baseline can be obtained by taking the set $S := S_f \cup S_g$ where $S_f$ is obtained by running the greedy algorithm for $\max_{|S| \leq k/2} f(S)$, and similarly $S_g$ is obtained by running the greedy algorithm for $\max_{|S| \leq k/2} g(S)$. Then if $f(S_f) \geq \alpha_f \cdot \max_{|S| \leq k} f(S)$ and $g(S_g) \geq \alpha_g \cdot \max_{|S| \leq k} g(S)$, we get that $(f\cdot g) (S_f \cup S_g) \geq \alpha_f \cdot \alpha_g \cdot fg(OPT)$. In the case of the above functions we get the following guarantees for $\alpha$ after running the greedy algorithm for $k/2$ iterations:  for a submodular function we get $\alpha \geq 1-e^{-1/2} $ via the standard greedy algorithm analysis, for the sum of submodular and metric diversity we get $\alpha \geq 1/8$ via the analysis from~\cite{borodin2017max}, and for proportionally submodular we get $\alpha \geq 0.05 $ via the analysis using Example~\ref{ex:proportionally-submod} and Theorem~\ref{thm:local-param} (which improves over the previous analysis given in~\cite{borodin2015proportionally}).
This leads to the following baselines (though there is room for optimizing the sizes of $S_f$ and $S_g$): a 0.155-approximation for the product of two submodular functions, a $0.049$-approximation for the product of a submodular function and the sum of submodular and metric diversity, and a $0.019$-approximation for the product of a submodular function and a proportionally submodular function. 

We note that our choice of cardinality $k/2$ for the initial set $S_0$ of the algorithm, and for the sets $S_f$ and $S_g$ used in the baselines, may not be optimal.
For the sake of consistency and to keep the argument as clean as possible, we used the same cardinality for all of them.

By using a similar argument to the one from Theorem~\ref{thm:app-product} one can also get constant factor approximations in the case where $f$ is a metric diversity function. This follows since if $S_f \subseteq E$ satisfies $f(S_f) \geq \alpha \cdot \max_{|S| \leq k} f(S)$, then $f(A) / f(A \cup B) \geq \alpha / (5 + \alpha) $ for any set $A \supseteq S_f$ and any set $B$ of size at most $k$ (see Claim~\ref{claim:diversity-product} in the Appendix).
The fact that this bound is worse than for submodular functions is expected, since $f$ is supermodular and hence $f_A(e) \leq f_B(e)$ whenever $A \subseteq B \subseteq E$ and $e \notin B$.

We now discuss the non-monotone case. While for monotone functions the algorithm always chooses $k$ elements from the original ground set $E$ (i.e., it never picks dummy elements), this may not be the case for non-monotone objectives. That is, for non-monotone objectives we have $S_k \subseteq E \cup D$. Hence, we cannot just directly plug the bounds for $\gamma_{A,B}$ from Section~\ref{sec:examples}, since these depend on the number of elements from $E$ that the current solution $S_i$ has. Our next result gives a guarantee with respect to the number of elements from $E$ that the algorithm picks.


\begin{proposition}
\label{prop:app-nonmonotone}
	Let $f:2^E \to \mathbb{R}_+$ be a set function with parameters $\gamma_{A,B} \in  [0,1]$ satisfying that $\gamma_{A,B} \geq \gamma_{A',B}$ whenever $|A| \geq |A'|$. Then, if the \greedy algorithm picks $m$ elements from the original ground set $E$ (i.e., not dummy elements), its output $S_k \subseteq E \cup D$ satisfies
	\begin{equation*}
	\E[f(S_k)] \geq \frac{1}{k e}  \Big[(k-m) \bar{\gamma}_1 +\sum_{i=0}^{m-1} \bar{\gamma}_{i}  \Big] \cdot f(\mathrm{OPT}),
	\end{equation*}
	where $\bar{\gamma}_i = \min\{ \gamma_{A,B} :|A|=i, \ 1 \leq |B| \leq k, \  A \cap B = \emptyset \}$.
\end{proposition}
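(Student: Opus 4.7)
The plan is to invoke Theorem~\ref{thm:non-monot-local} with deterministic parameters that we pin down by conditioning on the event $\mathcal{E}_m$ that the algorithm picks exactly $m$ elements from $E$. Since $\gamma_{A,B}\in[0,1]$ for every pair of disjoint sets, I can take $\bar{\alpha}_i = \bar{\beta}_i = \beta_i = 1$ for every $i$; these choices are valid and they yield the product factor $\prod_{j=1}^{k-1}(1-1/k) \ge 1/e$ in Theorem~\ref{thm:non-monot-local}, which becomes the $1/(ek)$ prefactor of the claim. What remains is to choose valid $\alpha_j$ and to lower bound $\sum_{j=0}^{k-1} \alpha_j$.

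At any step $j$ decompose $S_j = S_j^E \cup S_j^D$ with $S_j^E \subseteq E$ and $S_j^D \subseteq D$, and let $m_j := |S_j^E|$. Since dummy elements contribute $0$ to every marginal,
\[
\sum_{e \in \mathrm{OPT}} f_{S_j}(e) \;=\; \sum_{e \in \mathrm{OPT} \setminus S_j^E} f_{S_j^E}(e) \;\ge\; \bar\gamma_{m_j}\cdot f_{S_j}(\mathrm{OPT}),
\]
where I used $|\mathrm{OPT} \setminus S_j^E| \le k$ together with the fact that the assumed monotonicity of $\gamma_{A,B}$ in $|A|$ makes $\bar\gamma_i$ non-decreasing in $i$ (extend any smaller set to a larger one). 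Any deterministic lower bound for the random quantity $\bar\gamma_{m_j}$ is therefore a valid choice of $\alpha_j$ for Theorem~\ref{thm:non-monot-local}, both in the $f_{S_j}(\mathrm{OPT})\ge 0$ case (use the submodularity ratio) and in the $f_{S_j}(\mathrm{OPT})<0$ case (use Observation~\ref{obs:random-greedy}).

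To make that lower bound as large as possible I pin down the sequence $(m_j)$ from below. On $\mathcal{E}_m$ the sequence is non-decreasing with unit steps, $m_0 = 0$, and $m_k = m$, so $m_j \ge \max(0, m-k+j)$. The key strengthening is that $m_1 = 1$ holds deterministically (under the standing assumption $|E| \ge k$, which we may take since otherwise the cardinality constraint is vacuous): at the first iteration every real element has marginal $f(\{e\}) \ge 0$ while dummies have marginal $0$, so the tie-breaking rule of Algorithm~\ref{alg:random-greedy} forces $M_1 \subseteq E$ and the algorithm necessarily picks a real element. Combining this with the previous bound gives $m_j \ge \max(1, m-k+j)$ for every $j \ge 1$. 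Setting $\alpha_j = \bar\gamma_{m_j^{\min}}$ (non-decreasingness of $\bar\gamma$ ensures validity) and telescoping,
\[
\sum_{j=0}^{k-1} \alpha_j \;=\; \bar\gamma_0 + (k-m)\bar\gamma_1 + \sum_{i=1}^{m-1}\bar\gamma_i \;=\; (k-m)\bar\gamma_1 + \sum_{i=0}^{m-1}\bar\gamma_i,
\]
and plugging into Theorem~\ref{thm:non-monot-local} yields the claimed bound.

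The main subtlety I anticipate is exactly the $m_1 = 1$ observation: without it, the $k-m$ dummy iterations would contribute only $\bar\gamma_0$ apiece, and the analysis would miss the $\bar\gamma_1$ improvement stated in the proposition. Carefully verifying that the tie-breaking rule together with non-negativity of $f$ forces $M_1 \subseteq E$ at iteration $1$, and checking that the deterministic worst-case floor for the random $m_j$ indeed furnishes a valid parameter choice for Theorem~\ref{thm:non-monot-local} after conditioning on $\mathcal{E}_m$, are the two delicate points of the proof.
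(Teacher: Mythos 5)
Your proof mirrors the paper's: both observe that the tie-breaking rule and non-negativity of $f$ force the first picked element to be a real one ($m_1 = 1$), both floor the random sequence $(m_j)$ by the worst-case schedule where the remaining $m-1$ real picks happen last (giving $m_j \geq \max(1, m-k+j)$ for $j\geq 1$), and both feed the resulting deterministic floors into the non-monotone machinery --- you invoke Theorem~\ref{thm:non-monot-local} directly, the paper invokes Theorem~\ref{thm:local-param}, which the paper itself derives from Theorem~\ref{thm:non-monot-local} with the same choice $\alpha_i = \gamma_i$, $\beta_i = 1$. One small slip: taking $\bar{\alpha}_i = 1$ is not automatically valid from $\gamma_{A,B}\in[0,1]$; when $f_{S_{i-1}\cup\mathrm{OPT}}(M_i) \geq 0$ the hypothesis of Theorem~\ref{thm:non-monot-local} would then demand $\sum_{u\in M_i} f_{S_{i-1}\cup\mathrm{OPT}}(u) \geq f_{S_{i-1}\cup\mathrm{OPT}}(M_i)$, which the local submodularity ratio does not supply, but since $\bar{\alpha}_i$ never appears in the conclusion of Theorem~\ref{thm:non-monot-local} you may simply set $\bar{\alpha}_i = 0$ (which \emph{is} valid, as $\sum_u f_{S_{i-1}\cup\mathrm{OPT}}(u)\geq 0$ in that case) and nothing else in your argument changes.
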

\begin{proof}
    First note that since $\gamma_{A,B} \in [0,1]$ we also have that $\bar{\gamma}_i \in [0,1]$.
    We show that after iteration $i$, any realization of the algorithm (conditioned on the event that $m$ elements from $E$ are selected) satisfies 
    $
	\sum_{e \in \mathrm{OPT}} f_{S_i} (e) \geq  \min\{ \gamma_i \cdot f_{S_i} (\mathrm{OPT}),  f_{S_i} (\mathrm{OPT})\}
	$ 
	where 
	$$
    \gamma_i=
    \begin{cases}
    \bar{\gamma}_0  & \mbox{if } i=0, \\
    \bar{\gamma}_1 & \mbox{if } 1 \leq i \leq k-m, \\
    \bar{\gamma}_{i-k+m} & \mbox{if }  k-m+1 \leq i \leq k-1.
    \end{cases}
    $$
	Then the desired result follows from Theorem~\ref{thm:local-param}, since
    $
	\sum_{i=0}^{k-1} \gamma_i = \bar{\gamma}_0 + (k-m) \bar{\gamma}_1 +\sum_{i=1}^{m-1} \bar{\gamma}_{i} .
	$
    
    First note that when $f_{S_i} (\mathrm{OPT}) < 0$ we have
    \[
    \sum_{e \in \mathrm{OPT}} f_{S_i} (e) \geq \gamma_{ {\scriptscriptstyle S_i \cap E, \mathrm{OPT} \setminus S_i}} \cdot f_{S_i} (\mathrm{OPT}) \geq  f_{S_i} (\mathrm{OPT}) = \min\{ \gamma_i \cdot f_{S_i} (\mathrm{OPT}),  f_{S_i} (\mathrm{OPT})\},
    \]
    where the first inequality follows from the definition of the parameter $\gamma_{A,B}$, the second inequality follows since $\gamma_{ {\scriptscriptstyle S_i \cap E, \mathrm{OPT} \setminus S_i}} \in [0,1]$ and $f_{S_i} (\mathrm{OPT}) < 0$, and similarly the last equality follows since $\gamma_i \in [0,1]$ and $f_{S_i} (\mathrm{OPT}) < 0$. 
    Hence the inequality
    $
	\sum_{e \in \mathrm{OPT}} f_{S_i} (e) \geq  \min\{ \gamma_i \cdot f_{S_i} (\mathrm{OPT}),  f_{S_i} (\mathrm{OPT})\}
	$ 
	is always satisfied when $f_{S_i} (\mathrm{OPT}) < 0$. In the case where $f_{S_i} (\mathrm{OPT}) \geq 0$ we get
    \begin{equation}
    \label{eq:prop-nonmonotone}
        \sum_{e \in \mathrm{OPT}} f_{S_i} (e) \geq \gamma_{ {\scriptscriptstyle S_i \cap E, \mathrm{OPT} \setminus S_i}} \cdot f_{S_i} (\mathrm{OPT}) \geq \bar{\gamma}_{ {\scriptscriptstyle |S_i \cap E|} } \cdot f_{S_i} (\mathrm{OPT}),
    \end{equation}
    where the last inequality follows from the definition of $\bar{\gamma}_i$ and the fact that $|OPT \setminus S_i| \leq k$. We lower bound the term $|S_i \cap E|$ as follows. 
    
	We can always assume that in the first iteration the algorithm picks an element from the original ground set. This is because $f(\emptyset)= 0$ and $f(e) \geq 0$ for all $e \in E$ by non-negativity of $f$. Hence there is always a choice of $k$ elements from the original ground set for the candidate set $M_1$.
	Now, since $\gamma_{A,B} \geq \gamma_{A',B}$ whenever $|A| \geq |A'|$, we have that the values $\bar{\gamma}_i$ are non-decreasing. It then follows that the worst scenario occurs when the algorithm picks the remaining $m-1$ non-dummy elements in the last $m-1$ iterations. In this case, we get that $|S_0 \cap E|=0$, $|S_i \cap E| = 1$ for $1 \leq i \leq k-m$, and $|S_i \cap E| = i-k+m$ for $k-m+1 \leq i \leq k$. Combining this with Equation \eqref{eq:prop-nonmonotone} leads to the desired result.
\end{proof}

The above result can be used to obtain bounds for some of the examples discussed in Section~\ref{sec:examples} that satisfy $0 \leq \gamma_{A,B} \leq 1$ and have non-decreasing values $\gamma_{A,B}$ as a function of $|A|$, such as those from Example~\ref{ex:nonmonotone}. We discuss this next, where we state the approximation guarantees for the case the algorithm selects at least $k/2$ elements from $E$.

\begin{corollary}
	Let $f:2^E \to \mathbb{R}_+$ be a (non-monotone) set function, and assume the \greedy algorithm picks at least $k/2$ elements from $E$. Then its output $S_k \subseteq E \cup D$ satisfies the following guarantees:
	\vspace{0.1cm}
	\begin{enumerate}[(a)]
		\item If $f=g+h$ where $g$ is monotone submodular and $h$ is non-monotone proportionally submodular, then $\E[f(S_k)] \geq  \frac{0.05}{e} \cdot f(\mathrm{OPT})$.

		\vspace{0.1cm}

		\item If  $f(S):=g(S) + |S| \cdot h(S)$ where $g$ is monotone submodular and $h$ is non-monotone submodular, then $\E[f(S_k)] \geq  \frac{0.09}{e} \cdot f(\mathrm{OPT})$.
	\end{enumerate}
\end{corollary}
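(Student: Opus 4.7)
The plan is to apply Proposition~\ref{prop:app-nonmonotone} to each of the two settings. This reduces the task to three steps: (i) exhibit valid values $\gamma_{A,B} \in [0,1]$ that are non-decreasing in $|A|$, (ii) read off the worst-case $\bar{\gamma}_i = \min_{|A|=i,\,1\leq|B|\leq k} \gamma_{A,B}$, and (iii) lower-bound the resulting expression from Proposition~\ref{prop:app-nonmonotone} under the hypothesis that the \greedy algorithm selects at least $m \geq k/2$ elements from the original ground set $E$.

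For (i)--(ii), the relevant bullets of Example~\ref{ex:nonmonotone} directly supply $\gamma_{A,B} \geq \frac{3a(1+a)}{3a^2+3ab+b^2-1}$ for case (a) and $\gamma_{A,B} \geq \frac{a+1}{a+b}$ for case (b), with $a=|A|$ and $b=|B|$. Both expressions lie in $[0,1]$ for $a \geq 0$ and $b \geq 1$. Monotonicity in $a$ is immediate for (b) from $\partial_a \frac{a+1}{a+b} = \frac{b-1}{(a+b)^2} \geq 0$; for (a) a short derivative calculation reduces the sign-determining numerator to $3[3a^2(b-1)+(2a+1)(b^2-1)] \geq 0$ whenever $b \geq 1$. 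Each bound is strictly decreasing in $b$, so the minimum over $1 \leq b \leq k$ is attained at $b=k$, giving $\bar{\gamma}_i \geq \frac{3i(i+1)}{3i^2+3ik+k^2-1}$ for (a) and $\bar{\gamma}_i \geq \frac{i+1}{i+k}$ for (b).

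For (iii), since $\bar{\gamma}_i$ is non-decreasing in $i$ (inherited from the monotonicity in $a$), the expression $(k-m)\bar{\gamma}_1 + \sum_{i=0}^{m-1}\bar{\gamma}_i$ from Proposition~\ref{prop:app-nonmonotone} is non-decreasing in $m$: raising $m$ by one swaps a $\bar{\gamma}_1$ for $\bar{\gamma}_m \geq \bar{\gamma}_1$. Hence under $m \geq k/2$ the worst case is $m=k/2$, and the $(k-m)\bar{\gamma}_1/k$ contribution is $O(1/k)$ and vanishes as $k \to \infty$. The remaining quantity $\frac{1}{k}\sum_{i=0}^{k/2-1}\bar{\gamma}_i$ is a Riemann sum on $[0,1/2]$. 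For (b) the limit is $\int_0^{1/2}\frac{x}{x+1}\,dx = \frac{1}{2}-\ln\frac{3}{2} \approx 0.095$. For (a) the limit is $\int_0^{1/2}\frac{3x^2}{3x^2+3x+1}\,dx$; writing the integrand as $1-\frac{3x+1}{3x^2+3x+1}$ and using the decomposition $3x+1 = \frac{1}{2}(6x+3)-\frac{1}{2}$ splits the subtracted fraction into a log-derivative piece and a reciprocal-quadratic piece, the latter handled by completing the square $3x^2+3x+1=3(x+\tfrac12)^2+\tfrac14$, yielding the closed-form value $\frac{1}{2}-\frac{1}{2}\ln\frac{13}{4}+\frac{\sqrt{3}}{3}\bigl[\arctan(2\sqrt{3})-\arctan\sqrt{3}\bigr] \approx 0.051$. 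Dividing each integral by $e$ produces the asymptotic ratios $\approx 0.095/e > 0.09/e$ and $\approx 0.051/e > 0.05/e$, proving (a) and (b).

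The main technical step is the explicit evaluation of the integral in (a) (partial fractions plus an arctangent from completing the square); everything else is either a direct application of Proposition~\ref{prop:app-nonmonotone} or a short derivative computation. One minor caveat is that the Riemann-sum identification is only asymptotic in $k$, so to cover every finite $k$ one should either check numerically at small $k$ or refine the argument by comparing $\bar{\gamma}_i$ termwise to the continuous kernel evaluated at $x=i/k$; both make the finite-$k$ bound at least as large as the asymptotic constant, so the stated approximation guarantees $0.05/e$ and $0.09/e$ are valid throughout.
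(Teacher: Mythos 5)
Your proposal follows the same route the paper takes: invoke Proposition~\ref{prop:app-nonmonotone} with the local bounds $\bar\gamma_i$ coming from Example~\ref{ex:nonmonotone}, and then evaluate the resulting expression. The paper's own proof stops at ``plugging these values into Proposition~\ref{prop:app-nonmonotone} leads to the desired bounds'' and leaves the arithmetic to the reader; you supply the missing pieces, including the verification that the $\gamma_{A,B}$ lie in $[0,1]$ and are non-decreasing in $|A|$, the observation that the bound in Proposition~\ref{prop:app-nonmonotone} is non-decreasing in $m$, and the closed-form evaluation of the two limiting integrals $\int_0^{1/2}\frac{x}{x+1}\,dx = \tfrac12-\ln\tfrac32 \approx 0.095$ and $\int_0^{1/2}\frac{3x^2}{3x^2+3x+1}\,dx \approx 0.051$.

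One small caveat on your finite-$k$ remark. For case (a) the termwise comparison you suggest is not quite as direct as stated: one has $\bar\gamma_i \geq g(i/k)$ for $g(x)=\frac{3x^2}{3x^2+3x+1}$, but the left Riemann sum of the increasing function $g$ undershoots its integral, so that inequality alone does not give $\frac1k\sum\bar\gamma_i \geq \int g$. (The alternative comparison $\bar\gamma_i \geq g((i+1)/k)$, which would suffice since a right Riemann sum overshoots, fails for (a): it is equivalent to $(3i+1)(i+1)\geq k^2$, false for small $i$.) The claim nevertheless holds: a short Euler--Maclaurin computation shows the per-term excess $\bar\gamma_i-g(i/k)$ sums to roughly $\frac1k\int_0^{1/2}\frac{3x}{3x^2+3x+1}\,dx \approx 0.17/k$, which exceeds the Riemann undershoot $\approx g(1/2)/(2k) \approx 0.12/k$, and the extra nonnegative $(k-m)\bar\gamma_1/k$ term only helps; for case (b) the cleaner comparison $\bar\gamma_i=\frac{i+1}{i+k}\geq\frac{i+1}{i+1+k}=g((i+1)/k)$ with the right-endpoint Riemann sum does work outright. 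So your overall conclusion and route are correct; the detail worth tightening is which Riemann bound is being invoked in each case.
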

\begin{proof}
From Example~\ref{ex:nonmonotone} we know that the function $f$ from part (b) satisfies $\bar{\gamma}_i \geq (i+1) / (i+k)$, while the function from part (a) satisfies $\bar{\gamma}_i \geq (3i(i+1)) / (3i^2+3ik+k^2-1)$. Plugging these values into Proposition~\ref{prop:app-nonmonotone} leads to the desired bounds.
\end{proof}

\section{Conclusion}

In this paper we introduced a natural generalization of weak submodularity for non-monotone functions. We showed that a randomized greedy algorithm has provable approximation guarantees for maximizing these functions subject to a cardinality constraint. 
We also provided a fine-grained analysis that allows  the submodularity ratio to change throughout the algorithm. We discussed applications of our results for monotone and non-monotone functions.

It is open whether the $(\gamma \cdot e^{-1/\gamma})$-approximation is asymptotically tight for the maximization problem subject to a cardinality constraint. Another natural direction for future work is to consider the non-monotone maximization problem under more general constraints, such as matroids or knapsacks.

\bibliography{references}

\appendix

\section{Proofs for Section~\ref{sec:examples}}\label{sec:appendix-examples-proof}

In this section we present the proofs for the results discussed in Section~\ref{sec:examples}. Given that the argument for proportionally submodular functions (i.e., Example~\ref{ex:proportionally-submod}) is much longer and involved than the rest, we discuss it in a different subsection.

\subsection{Proof of Example~\ref{ex:proportionally-submod} from Section~\ref{sec:examples}}\label{sec:proportionally-submod}

Recall that a function $f: 2^E \to \mathbb{R}_+$ is \emph{proportionally submodular}~\cite{borodin2014weak,borodin2015proportionally} if
\begin{align}
	|S|f(T) + |T|f(S) \geq |S \cap T|f(S \cup T) + |S \cup T|f(S \cap T)
	\label{eq:proportional-submodularity}
\end{align}
for every $S,T \subseteq E$.
The next three results show the desired bound for $\gamma_{A,B}$.

\begin{lemma}\label{lem:base-decrement}
	Let $f: 2^E \to \mathbb{R}_+$ be a proportionally submodular function. Then for any set $A \subseteq E$ of size $a$ and elements $e,e' \in E \setminus A$, we have
	\[
	f_{A \cup \{e'\}}(e) \leq \frac{1}{a}f_A(e') + \left(1+\frac{1}{a}\right) f_A(e).
	\]
\end{lemma}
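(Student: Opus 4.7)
The plan is to apply the proportional submodularity inequality to a carefully chosen pair of sets, and then rearrange the resulting inequality into the form of marginal gains we want. The natural candidate is to take $S = A \cup \{e\}$ and $T = A \cup \{e'\}$, since these are the smallest sets whose union ($A \cup \{e,e'\}$) and intersection ($A$) will bring into play precisely the four function values $f(A)$, $f(A \cup \{e\})$, $f(A \cup \{e'\})$, and $f(A \cup \{e,e'\})$ whose combinations form the marginal gains $f_{A \cup \{e'\}}(e)$, $f_A(e)$, and $f_A(e')$ appearing in the statement.

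With this choice, we have $|S| = |T| = a+1$, $|S \cap T| = a$, and $|S \cup T| = a+2$, so inequality~\eqref{eq:proportional-submodularity} specializes to
\[
(a+1)\bigl[f(A \cup \{e\}) + f(A \cup \{e'\})\bigr] \;\geq\; a \cdot f(A \cup \{e,e'\}) + (a+2)\, f(A).
\]
Dividing through by $a$ and isolating $f(A \cup \{e,e'\})$ on the left gives an upper bound on $f(A \cup \{e,e'\})$ as a linear combination of $f(A \cup \{e\})$, $f(A \cup \{e'\})$, and $f(A)$.

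The remaining step is purely algebraic: subtract $f(A \cup \{e'\})$ from both sides so that the left-hand side becomes $f_{A \cup \{e'\}}(e)$, then group the right-hand side by factoring out $f(A)$ from the constant terms and writing the rest as marginal gains relative to $A$. A direct check of the coefficients shows that the $f(A)$ terms cancel exactly (the combined coefficient becomes $\tfrac{1 + (a+1) - (a+2)}{a} = 0$), leaving precisely $\tfrac{1}{a} f_A(e') + (1+\tfrac{1}{a}) f_A(e)$ on the right.

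There is no real obstacle here beyond guessing the correct $(S,T)$; once one tries the symmetric pair $S = A \cup \{e\}$, $T = A \cup \{e'\}$, the cardinality pattern $(a+1, a+1, a, a+2)$ produces coefficients that combine cleanly, and the rest of the proof is an inspection of a two-line calculation. The only mild subtlety is to verify that all the required marginals are well-defined, i.e.\ that $e, e' \notin A$ and $e \neq e'$; both are guaranteed by the hypothesis $e, e' \in E \setminus A$ together with the observation that if $e = e'$ the inequality is trivial by non-negativity of $f_A(e)$.
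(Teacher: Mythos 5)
Your proof is correct and identical to the paper's: both substitute $S = A \cup \{e\}$, $T = A \cup \{e'\}$ into~\eqref{eq:proportional-submodularity} and rearrange, with only cosmetic differences in the order of algebraic steps. One small inaccuracy in your closing aside: proportionally submodular functions need not be monotone, so $f_A(e)$ can be negative and the $e=e'$ case of the inequality can actually fail; both you and the paper implicitly require $e \neq e'$, which is fine since the lemma is only ever invoked on distinct elements.
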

\begin{proof}
	Substituting $S = A \cup \{e\}$ and $T = A \cup \{e'\}$ in~\eqref{eq:proportional-submodularity}, we have
	\[
	(a+1) f(A \cup \{e'\}) + (a+1) f(A \cup \{e\}) \geq a f(A \cup \{e,e'\}) + (a+2) f(A),
	\]
	which implies
	\[
	f_A(e') + (a+1) f_A(e) \geq a f_{A \cup \{e'\}}(e).
	\qedhere
	\]
\end{proof}

\begin{lemma}\label{lem:base-shrink}
	Let $f: 2^E \to \mathbb{R}_+$ be a proportionally submodular function. Then for any set $A$ of size $a$ and elements $e,e_1,\ldots,e_b \in E \setminus A$, we have
	\[
	f_{A \cup \{e_1,\ldots,e_b\}}(e) \leq \frac{a+b}{a}f_A(e) + \frac{a+b}{a(a+1)} \sum_{i=1}^b f_A(e_i).
	\]
\end{lemma}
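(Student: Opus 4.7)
The plan is a proof by induction on $b$, combined with a symmetrization step that averages over $b$ different applications of the base lemma. The base case $b=1$ is exactly Lemma~\ref{lem:base-decrement}, after rewriting $\frac{1}{a}f_A(e_1) + \bigl(1+\tfrac{1}{a}\bigr)f_A(e)$ as $\frac{a+1}{a(a+1)}f_A(e_1) + \frac{a+1}{a}f_A(e)$.

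For the inductive step, set $T = \{e_1,\ldots,e_b\}$ and apply Lemma~\ref{lem:base-decrement} in all $b$ possible ways: for each $j \in \{1,\ldots,b\}$, invoke the base lemma with ground set $A \cup (T\setminus\{e_j\})$ (which has size $a+b-1$) and newly added element $e_j$. This yields
\[
f_{A\cup T}(e) \leq \frac{1}{a+b-1}\, f_{A\cup(T\setminus\{e_j\})}(e_j) + \frac{a+b}{a+b-1}\, f_{A\cup(T\setminus\{e_j\})}(e).
\]
Then bound both marginals on the right using the inductive hypothesis applied with the $b-1$ new elements $T\setminus\{e_j\}$ (once with target $e_j$, once with target $e$). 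After the $a+b-1$ factors cancel and like terms are collected, each of these $b$ inequalities simplifies to
\[
f_{A\cup T}(e) \leq \frac{a+b}{a}\, f_A(e) + \frac{a+1}{a(a+1)}\, f_A(e_j) + \frac{a+b+1}{a(a+1)} \sum_{i\neq j} f_A(e_i).
\]

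Individually these are not the claimed bound, because the coefficients are asymmetric: the element $e_j$ that happens to be ``added last'' gets the smaller coefficient $\frac{a+1}{a(a+1)}$, while each of the other $b-1$ elements gets the larger coefficient $\frac{a+b+1}{a(a+1)}$. The key step is to average these $b$ inequalities over $j$. For any fixed $e_k$, its total coefficient across the $b$ bounds is $\frac{a+1}{a(a+1)} + (b-1)\cdot\frac{a+b+1}{a(a+1)}$, and the identity $(a+1) + (b-1)(a+b+1) = b(a+b)$ shows that dividing by $b$ gives exactly $\frac{a+b}{a(a+1)}$ for every element, producing the statement of the lemma.

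The main obstacle is precisely this coefficient asymmetry: a naive ``peel off $e_b$ last'' induction yields a bound whose coefficients on the $b$ elements cannot be absorbed into the symmetric target $\frac{a+b}{a(a+1)}\sum_i f_A(e_i)$, and the mismatch worsens with $b$. Averaging over the $b$ choices of which element to add last is what balances the ``too small'' contribution from the last-added element against the ``too large'' contributions from the other $b-1$ elements; the identity $(a+1)+(b-1)(a+b+1)=b(a+b)$ is the algebraic reason the symmetrization produces exactly the claimed constant.
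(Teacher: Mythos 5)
Your proof is correct, but it takes a genuinely different route from the paper's. The paper does a direct (non-inductive) expansion: it repeatedly applies Lemma~\ref{lem:base-decrement} to all marginals simultaneously, producing a binary tree of depth $b$ where level $i$ removes $e_i$ from the base; it then reads off explicit coefficients $c$ and $c_1,\dots,c_b$ on the leaves $f_A(e)$ and $f_A(e_i)$ as products over the levels, evaluates $\sum_i c_i$ by a telescoping-product identity, and finally averages over cyclic orderings of $e_1,\dots,e_b$ to symmetrize. You instead induct on $b$: peel off a single element $e_j$ via the base lemma, invoke the inductive hypothesis on both resulting marginals (each over the base of size $a+b-1$), and then average over the choice of which $e_j$ is peeled off last; the identity $(a+1)+(b-1)(a+b+1)=b(a+b)$ makes the symmetrization close. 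The two arguments share the same symmetrization trick, but your induction avoids the binary-tree bookkeeping and the explicit coefficient/telescoping computation, so it is somewhat shorter and more transparent; the paper's direct expansion, on the other hand, exposes the exact (asymmetric) pre-averaging coefficients in closed form, which could be useful if one wanted sharper, order-dependent bounds. I verified your inductive step and the averaging identity; both check out.
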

\begin{proof}
	Let $d := a + b$.
	We successively apply Lemma~\ref{lem:base-decrement} to $f_{A \cup \{e_1,\ldots,e_b\}}(e)$.
	Each application of the lemma generates two terms so the whole applications can be seen as a binary tree.
	For the nodes in the $i$-th level, we apply Lemma~\ref{lem:base-decrement} to remove $e_i$ from the base.
	We get
	\[
	f_{A \cup \{e_1,\ldots,e_b\}}(e) \leq c \cdot f_A(e) + \sum_{i=1}^b c_i \cdot f_A(e_i),
	\]
	where
	\begin{align*}
		c & = \prod_{i=1}^b \left(1+\frac{1}{d-i}\right), \\
		c_1 & = \frac{1}{d-1} \prod_{i=2}^b \left(1+\frac{1}{d-i}\right)
		= \left(1 + \frac{2}{d-1} - 1 - \frac{1}{d-1}\right) \prod_{i=2}^b \left(1+\frac{1}{d-i}\right), \\
		c_2 & = \left(1+\frac{2}{d-1}\right)\frac{1}{d-2}\prod_{i=3}^b \left(1+\frac{1}{d-i}\right)
		= \left(1+\frac{2}{d-1}\right)\left(1+\frac{2}{d-2} - 1-\frac{1}{d-2} \right)\prod_{i=3}^b \left(1+\frac{1}{d-i}\right), \\
		& \vdots \\
		c_b & = \prod_{i=1}^{b-1} \left(1+\frac{2}{d-i}\right) \frac{1}{d-b}
		= \prod_{i=1}^{b-1} \left(1+\frac{2}{d-i}\right) \left(1+\frac{2}{d-b}-1-\frac{1}{d-b} \right).
	\end{align*}
	In addition, we have
	\[
	\sum_{i=1}^b c_i = \prod_{i=1}^{b} \left(1+\frac{2}{d-i}\right) - \prod_{i=1}^{b} \left(1+\frac{1}{d-i}\right) = \frac{d(d+1)}{(d-b)(d-b+1)} - \frac{d}{d-b} = \frac{db}{(d-b)(d-b+1)}.
	\]
	Note that the ordering of the elements $e_1,\ldots,e_b$ is arbitrary.
	Hence, summing up the $b$ inequalities obtained by rotating the ordering and dividing it by $b$, we get
	\[
	f_{A \cup \{e_1,\ldots,e_b\}}(e)
	\leq c \cdot f_A(e) + \sum_{i=1}^b \left(\frac{1}{b}\sum_{j=1}^b c_j\right) f_A(e_i)
	= \frac{d}{d-b} f_A(e) + \frac{d}{(d-b)(d-b+1)} \sum_{i=1}^b f_A(e_i).
	\]
	Substituting $d = a + b$ leads to the desired result.
\end{proof}

\begin{theorem}\label{thm:proportionally-submod}
	A non-negative proportionally submodular function $f: 2^E \to \mathbb{R}_+$ has
	\[
	\gamma_{A,B} \geq \frac{3 a (1 + a)}{3 a^2 + 3 a b + b^2 - 1}
	\]
	for any two disjoint sets $A,B \subseteq E$, 
	where $a=|A|$ and $b=|B|$.
\end{theorem}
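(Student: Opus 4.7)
The plan is to combine Lemma~\ref{lem:base-shrink} with a telescoping identity and a symmetrization over orderings. Write $B = \{e_1, \ldots, e_b\}$ and expand the joint marginal gain as a telescoping sum
\[
f_A(B) = \sum_{i=1}^{b} f_{A \cup \{e_1,\ldots,e_{i-1}\}}(e_i).
\]
Each summand is a marginal of the form controlled by Lemma~\ref{lem:base-shrink} (with the role of ``$e$'' in that lemma played by $e_i$ and the role of ``$\{e_1,\ldots,e_b\}$'' in the lemma played by $\{e_1,\ldots,e_{i-1}\}$), so I get
\[
f_{A \cup \{e_1,\ldots,e_{i-1}\}}(e_i) \;\leq\; \frac{a+i-1}{a}\, f_A(e_i) \;+\; \frac{a+i-1}{a(a+1)} \sum_{j=1}^{i-1} f_A(e_j).
\]
Summing these over $i = 1, \ldots, b$ already bounds $f_A(B)$ by a linear combination of the singleton marginals $f_A(e)$, but the coefficients depend on the chosen ordering of $B$.

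To symmetrize, I would apply the inequality to every one of the $b!$ orderings of $B$ and average. By symmetry, the averaged coefficient of $f_A(e)$ is the same for every $e \in B$, so the averaged inequality takes the form $f_A(B) \leq C_{a,b} \sum_{e \in B} f_A(e)$ for some scalar $C_{a,b}$, and it then suffices to check $C_{a,b} = \tfrac{3a^2+3ab+b^2-1}{3a(a+1)}$. The first term of Lemma~\ref{lem:base-shrink} contributes, per element, the average of $\tfrac{a+i-1}{a}$ over $i \in \{1,\ldots,b\}$, which equals $\tfrac{2a+b-1}{2a}$. The second term contributes, per element, $\tfrac{1}{a(a+1)b}\sum_{i=1}^{b}(a+i-1)(i-1)$, since in a random permutation an element lies among the first $i-1$ positions with probability $(i-1)/b$. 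Using $\sum_{i=1}^{b}(a+i-1)(i-1) = \tfrac{b(b-1)(3a+2b-1)}{6}$, the second contribution becomes $\tfrac{(b-1)(3a+2b-1)}{6a(a+1)}$.

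Adding these two contributions over a common denominator $6a(a+1)$ and collecting terms should yield the numerator $6a^2 + 6ab + 2b^2 - 2 = 2(3a^2+3ab+b^2-1)$, so $C_{a,b} = \tfrac{3a^2+3ab+b^2-1}{3a(a+1)}$. Inverting gives $\sum_{e \in B} f_A(e) \geq \tfrac{3a(a+1)}{3a^2+3ab+b^2-1}\, f_A(B)$, proving the stated bound on $\gamma_{A,B}$.

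The main obstacle I foresee is not conceptual but arithmetic: keeping the two coefficient computations clean after averaging over orderings, especially the evaluation of $\sum_{i=1}^{b}(a+i-1)(i-1)$ and verifying that the numerator collapses exactly to $2(3a^2+3ab+b^2-1)$. A secondary subtlety is making sure the application of Lemma~\ref{lem:base-shrink} is valid for the initial index $i=1$ (where the inner sum is empty and the bound degenerates to $f_A(e_1) \leq f_A(e_1)$, which is fine) and that disjointness of $A$ and $B$ guarantees all the elements invoked in the lemma lie outside the current base, so the lemma's hypotheses are met throughout the telescoping.
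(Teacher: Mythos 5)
Your proposal is correct and matches the paper's own proof in all essentials: the same telescoping decomposition of $f_A(B)$, the same application of Lemma~\ref{lem:base-shrink} to each telescoping term, and the same symmetrization over orderings to average the coefficients. The only cosmetic difference is that you compute the averaged per-element coefficient by splitting it into the two contributions and using a probabilistic count (probability $(i-1)/b$ of lying among the first $i-1$ positions), whereas the paper first writes down the per-ordering coefficients $c_i$ explicitly and then averages $\sum_i c_i / b$; both yield $T_{a,b} = (3a^2+3ab+b^2-1)/(3a(a+1))$.
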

\begin{proof}
    For $|B|=1$ we get $\gamma_{A,B}=1$, which clearly holds since in this case we trivially have $\sum_{e\in B}f_A(e)=f_A(B)$.
	Now let $B =\{ e_1,\ldots,e_b \}$ with $b \geq 2$.
	By Lemma~\ref{lem:base-shrink}, we have
	\begin{align*}
		f_A (B) & = f_A(\{e_1,\ldots,e_b\})
		= \sum_{i=1}^b f_{A \cup \{e_1,\ldots,e_{i-1}\}}(e_i) \\ &
		\leq \sum_{i=1}^b  \left( \frac{a+i-1}{a}f_A(e_i) + \frac{(a+i-1)}{a(a+1)}  \sum_{j=1}^{i-1} f_A(e_j) \right)
		\leq \sum_{i=1}^b c_i f_A(e_i),
	\end{align*}
	where
	\begin{align*}
		c_i
		= \frac{a+i-1}{a} + \sum_{j=i+1}^{b}  \frac{a+j-1}{a(a+1)}.
	\end{align*}
	Note that
	\[
	\sum_{i=1}^b c_i = \frac{-b + 3 a^2 b + 3 a b^2 + b^3}{3 a (1 + a)} =: b \cdot T_{a,b}.
	\]
	As the ordering of the elements $e_1,\ldots,e_b$ is arbitrary, we have
	\[
	f_A(B)
	\leq \sum_{i=1}^b \left(\frac{1}{b} \sum_{j=1}^b c_j\right) f_A(e_i)
	=
	\sum_{i=1}^b T_{a,b }f_A(e_i) = T_{a,b } \sum_{i=1}^b f_A(e_i).
	\]
	Taking $\gamma_{A,B} := 1/T_{a,b}$ leads to the desired result.
\end{proof}

\subsection{Proofs for other examples in Section~\ref{sec:examples}}\label{sec:examples-proofs}

\metrics*
\begin{proof}
	Let $A,B \subseteq E$ be two disjoint sets with $|A|=a$ and $|B|=b$.
	Let $d(A,B)$ denote the sum of distances between elements in $A$ and $B$. That is, $d(A,B) = \sum_{u \in A, v \in B} d(u,v)$. Then one can write $f(A \cup B) = f(A) + f(B) + d(A,B)$. Hence  $f_A(B) = f(A \cup B) - f(A) = f(B) + d(A,B)$ and $\sum_{e \in B} f_A(e) = \sum_{e \in B} d(A,e) = d(A,B)$.

	We can now use a lemma from~\cite{ravi1994heuristic} that shows that $a \cdot f(B) \leq (b-1) \cdot d(A,B)$ whenever $A$ and $B$ are disjoint. This then implies
	\[
	f_A (B) = f(B) + d(A,B) \leq \left( \frac{b-1}{a} + 1 \right) d(A,B) = \left( \frac{a+b-1}{a} \right) \sum_{e \in B} f_A(e).
	\]
\end{proof}

\product*
\begin{proof}
	Let $A,B \subseteq E$ be two disjoint sets. Then
	\begin{align*}
	h_A(e) & = f(A+e)g(A+e)- f(A) g(A) \\ & =
	f(A+e)g(A+e) - f(A+e)g(A) + f(A+e)g(A) - f(A) g(A)  \\
	& = f(A+e)g_A(e) + f_A(e)g(A).
	\end{align*}
	Hence
	\begin{align}
	\label{eq1}
	\sum_{e \in B} h_A(e)
	& = \sum_{e \in B} \Bigl(f(A+e)g_A(e) + f_A(e)g(A)\Bigr)
	\geq
	\sum_{e \in B} \Bigl(f(A)g_A(e) + f_A(e)g(A)\Bigr) \nonumber \\
	& \geq f(A) \cdot \gamma^g_{A,B} \cdot g_A(B) + \gamma^f_{A,B} \cdot f_A(B) \cdot g(A),
	\end{align}
	where the first inequality follows from the monotonicity of $f$ and $g$ (since then $f(A+e) \geq f(A)$ and $g_A(e) \geq 0$), and the second inequality follows from the weak submodularity of $f$ and $g$.
	We also have
	\begin{align}
	\label{eq2}
	h_A(B)
	& = f(A \cup B) g(A \cup B) - f(A) g(A) \nonumber \\ &
	= f(A \cup B) g(A \cup B) - f(A \cup B) g(A) + f(A \cup B) g(A) - f(A) g(A) \nonumber \\
	& = f(A \cup B) g_A(B) + f_A(B) g(A),
	\end{align}
	and
	\begin{align}
	\label{eq3}
	h_A(B)
	& = f(A \cup B) g(A \cup B) - f(A) g(A) \nonumber \\ &
	= f(A \cup B) g(A \cup B) - g(A \cup B) f(A) + g(A \cup B) f(A) - f(A) g(A) \nonumber \\
	& = g(A \cup B) f_A(B) + g_A(B) f(A).
	\end{align}
	Now assume that $\gamma^g_{A,B} \leq \gamma^f_{A,B}$. Then combining expressions~\eqref{eq1} and~\eqref{eq2} we obtain
	\begin{align*}
	\frac{f(A)}{f(A \cup B)} \cdot \gamma^g_{A,B} \cdot h_A(B) & =  f(A) \cdot \gamma^g_{A,B} \cdot g_A(B) + \frac{f(A)}{f(A \cup B)} \cdot \gamma^g_{A,B} \cdot f_A(B) \cdot g(A) \\
	& \leq f(A) \cdot \gamma^g_{A,B} \cdot g_A(B) + \gamma^f_{A,B}  \cdot f_A(B) \cdot g(A) \leq \sum_{e \in B}h_A(e),
	\end{align*}
	where the first inequality holds since $\frac{f(A)}{f(A \cup B)} \cdot \gamma^g_{A,B} \leq \gamma^g_{A,B} \leq \gamma^f_{A,B} $.

	In a similar fashion, assume that $\gamma^f_{A,B} \leq \gamma^g_{A,B} $. Then combining expressions~\eqref{eq1} and~\eqref{eq3} we obtain
	\begin{align*}
	\frac{g(A)}{g(A \cup B)} \cdot \gamma^f_{A,B} \cdot h_A(B) & =  g(A) \cdot \gamma^f_{A,B} \cdot f_A(B) + \frac{g(A)}{g(A \cup B)} \cdot \gamma^f_{A,B} \cdot g_A(B) \cdot f(A) \\
	& \leq g(A) \cdot \gamma^f_{A,B} \cdot f_A(B) + \gamma^g_{A,B}  \cdot g_A(B) \cdot f(A) \leq \sum_{e \in B}h_A(e),
	\end{align*}
	where the first inequality holds since $\frac{g(A)}{g(A \cup B)} \cdot \gamma^f_{A,B} \leq \gamma^f_{A,B} \leq \gamma^g_{A,B}$.
	This proves the first part of the statement.

	In a very similar fashion we can show the second part. Now assume that $f$ and $g$ have global parameters $\gamma^f$ and $\gamma^g$ respectively, such that $\gamma^f \geq \gamma^g$. Then we have $\gamma^g_{A,B} \geq \gamma^g$ and $\gamma^f_{A,B} \geq \gamma^f \geq \gamma^g$ for all pairs of disjoint sets $A$ and $B$. By using Equation~\eqref{eq1} we get
	\begin{align*}
	\sum_{e \in B} h_A(e)
	& \geq f(A) \cdot \gamma^g \cdot g_A(B) + \gamma^g \cdot f_A(B) \cdot g(A) = \gamma^g \cdot \big[f(A) \cdot  g_A(B) + f_A(B) \cdot g(A) \big].
	\end{align*}
	This leads to the bounds
	\begin{align*}
	\sum_{e \in B} h_A(e)
	& \geq \gamma^g \cdot \big[f(A) \cdot  g_A(B) + f_A(B) \cdot g(A) \big] \\ &
	\geq \gamma^g \cdot \left[f(A) \cdot  g_A(B) + \frac{f(A)}{f(A \cup B)} \cdot f_A(B) \cdot g(A) \right]
	= \frac{f(A)}{f(A \cup B)} \cdot \gamma^g \cdot h_A (B),
	\end{align*}
	and
	\begin{align*}
	\sum_{e \in B} h_A(e)
	& \geq \gamma^g \cdot \big[f(A) \cdot  g_A(B) + f_A(B) \cdot g(A) \big] \\ &
	\geq \gamma^g \cdot \left[ \frac{g(A)}{g(A \cup B)} \cdot f(A) \cdot  g_A(B) +  f_A(B) \cdot g(A) \right]
	= \frac{g(A)}{g(A \cup B)} \cdot \gamma^g \cdot h_A (B).
	\end{align*}
	Hence the product function $h$ has parameter $\gamma_{A,B} \geq  \gamma^g \cdot \max \{ \frac{f(A)}{f(A \cup B)} , \frac{g(A)}{g(A \cup B)} \}$. This concludes the proof.
\end{proof}

\prodcardisubmod*
\begin{proof}
	Let $A,B \subseteq E$ be two disjoint sets, and denote $a=|A|$ and $b=|B|$.
	We prove (a) first. Note that
	\[
	g_A(e) = (a+1)f(A+e)- a f(A) = a f_A(e) + f(A+e) = (a+1) f_A (e) + f(A).
	\]
	Hence
	\[
	\sum_{e \in B} g_A(e) = (a+1) \sum_{e \in B} f_A (e) + b f(A) \geq (a+1) f_A (B) + b f(A) ,
	\]
	where the inequality follows from submodularity. We also have
	\[
	g_A(B) = (a+b) f(A \cup B) - a f(A) = a f_A (B) + b f(A \cup B) = (a+b) f_A (B) + b f(A).
	\]
	Combining the above two expressions we immediately obtain
	\[
	\sum_{e \in B} g_A(e) \geq (a+1) f_A (B) + b f(A) \geq \frac{a+1}{a+b} g_A(B),
	\]
	where the last inequality uses the non-negativity of $f$, i.e., that $f(A)\geq 0$.

	We now proceed to prove statement (b).
	For this case observe that
	\[
	g_A(e) = \frac{f(A+e)}{a+1} - \frac{f(A)}{a} = \frac{af(A+e)- (a+1)f(A)}{a(a+1)} = \frac{af_A (e) - f(A)}{a(a+1)} = \frac{f_A (e) - g(A)}{a+1}.
	\]
	Hence
	\[
	\sum_{e \in B} g_A(e) = \sum_{e \in B}  \frac{f_A (e) - g(A)}{a+1} = \frac{\sum_{e \in B} f_A (e) - b g(A)}{a+1} \geq \frac{f_A (B) - b g(A)}{a+1},
	\]
	where the inequality follows from submodularity. We also have
	\[
	g_A(B) = \frac{f(A\cup B)}{a+b} - \frac{f(A)}{a} = \frac{af(A \cup B)- (a+b)f(A)}{a(a+b)} = \frac{af_A (B) - bf(A)}{a(a+b)} = \frac{f_A (B) - bg(A)}{a+b}.
	\]
	Combining the above two expressions we immediately obtain
	\[
	\sum_{e \in B} g_A(e) \geq \frac{f_A (B) - b g(A)}{a+1} = \frac{a+b}{a+1} g_A(B).
	\qedhere
	\]
\end{proof}

\sumproperty*
\begin{proof}
	Let  $A,B$ be two disjoint sets. 
	Then when $f,g$ are both monotone we have
	\begin{align*}
	\sum_{e \in B} {(f+g)}_{A}(e) &= \sum_{e \in B} f_{A} (e) + \sum_{e \in B} g_{A} (e) \geq
	\gamma^f_{A,B} \cdot f_{A} (B) + \gamma^g_{A,B} \cdot g_{A} (B) \\ &
	\geq \min\{ \gamma^f_{A,B}, \gamma^g_{A,B}\} \cdot [f_{A} (B) + g_{A} (B)] = \min\{ \gamma^f_{A,B}, \gamma^g_{A,B}\} \cdot {(f+g)}_{A} (B),
	\end{align*}
	where the second inequality follows since $f_A (B), g_A (B) \geq 0$ by monotonicity of $f$ and $g$. Similarly, in the case where $f$ is monotone and $g$ is non-monotone we get
	\begin{align*}
	\sum_{e \in B} {(f+g)}_{A} (e) &= \sum_{e \in B} f_{A} (e) + \sum_{e \in B} g_{A} (e) \geq
	\gamma^f_{A,B} \cdot f_{A} (B) + \gamma^g_{A,B} \cdot g_{A} (B) \\ &
	\geq  \gamma^g_{A,B} \cdot [f_{A} (B) + g_{A} (B)] =  \gamma^g_{A,B} \cdot {(f+g)}_{A} (B),
	\end{align*}
	where the second inequality follows by the assumption $ \gamma^g_{A,B} \leq \gamma^f_{A,B}$ and the fact that $f_A (B)\geq 0$ by monotonicity of $f$.
\end{proof}

\divsubmod*
\begin{proof}
	Let  $A,B$ be two disjoint sets with $|A|=a$ and $|B|=b$. The first statement follows from Proposition~\ref{prop:sum-property} and the fact that  $\gamma^{f}_{A,B} \geq 1 \geq \gamma^{g}_{A,B} \geq  \frac{a}{a+b-1}$, where the last inequality follows from Example~\ref{ex:diversity}.

	For the second statement, recall that in Example~\ref{ex:|S|f(S)} we showed that the function $h(S):=|S| \cdot g(S)$ has parameter $\gamma_{A,B}\geq \frac{a+1}{a+b}$. Using that submodular functions have parameter $\gamma_{A,B}\geq 1$ we immediately get
	\begin{align*}
	\sum_{e \in B} {(f+g)}_{A} (e) &= \sum_{e \in B} f_{A} (e) + \sum_{e \in B} g_{A} (e) \geq
	f_{A} (B) + \frac{a+1}{a+b} \cdot g_{A} (B) \\ &
	\geq  \frac{a+1}{a+b} \cdot [f_{A} (B) + g_{A} (B)] =  \frac{a+1}{a+b} \cdot {(f+g)}_{A} (B),
	\end{align*}
	where in the second inequality we use that $f$ is monotone.

	The third statement follows in a very similar fashion by using the result from Example~\ref{ex:proportionally-submod}.
\end{proof}

\section{Applications for the product of monotone set functions}\label{sec:appendix-product}

\begin{claim}\label{claim:submodular-product}
	Let $f:2^E \to \R_+$ be a monotone submodular function, and denote $S^* := \argmax_{|S| \leq k} f(S)$.
	Let $U \subseteq E$ be such that $f(U) \geq \alpha \cdot f(S^*)$.
	Then, for any pair of sets $A,B$ such that $A \supseteq U$ and $|B| \leq k$ it holds that
	\[
	\frac{f(A) }{f(A \cup B)} \geq \frac{\alpha }{1+ \alpha }.
	\]
\end{claim}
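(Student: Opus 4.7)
The plan is to directly chain two elementary inequalities: a lower bound on $f(A)$ obtained from monotonicity, and an upper bound on $f(A\cup B)$ obtained from submodularity together with the optimality of $S^*$. First I would observe that monotonicity gives $f(A) \geq f(U) \geq \alpha \cdot f(S^*)$, since $A \supseteq U$ by assumption. This provides the desired lower bound on the numerator in terms of $f(S^*)$.

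Next, I would control the denominator by writing $f(A\cup B) = f(A) + f_A(B)$ and bounding the marginal $f_A(B)$. Submodularity (applied with the empty set as the smaller base) together with non-negativity of $f$ gives $f_A(B) \leq f(B) - f(\emptyset) \leq f(B)$. Since $|B| \leq k$ and $S^*$ is a maximizer over all sets of size at most $k$, we have $f(B) \leq f(S^*)$. Combining with the first step, $f(S^*) \leq f(A)/\alpha$, so
\[
f(A \cup B) \;\leq\; f(A) + f(S^*) \;\leq\; f(A) + \frac{f(A)}{\alpha} \;=\; \frac{1+\alpha}{\alpha}\, f(A).
\]
Rearranging yields $f(A)/f(A\cup B) \geq \alpha/(1+\alpha)$, as claimed.

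There is essentially no obstacle here: each step is a one-line application of a standard property (monotonicity, submodularity with $f(\emptyset)\geq 0$, optimality of $S^*$). The only thing to double-check is the inequality $f_A(B) \leq f(B)$, which relies on $f(\emptyset) \geq 0$ (true by non-negativity) and on the diminishing-returns form of submodularity applied between the base sets $\emptyset$ and $A$. With that verified, the proof reduces to the two-line chain above.
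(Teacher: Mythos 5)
Your proof is correct and follows essentially the same route as the paper's: bound $f_A(B) \le f(B)$ by submodularity and non-negativity, use $f(B) \le f(S^*)$ since $|B|\le k$, and use $f(A)\ge f(U)\ge \alpha f(S^*)$ by monotonicity. The only difference is presentational (you rearrange an inequality on $f(A\cup B)$; the paper manipulates the ratio directly).
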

\begin{proof}
	\begin{align*}
		\frac{f(A) }{f(A \cup B)}  & = \frac{f(A) }{f(A) + f_A(B)} \geq \frac{f(A) }{f(A) + f(B)}
		\geq \frac{f(A) }{f(A) + f(S^*)} = \frac{1 }{1+ f(S^*) / f(A)} \\ &
		\geq \frac{1 }{1+ 1 / \alpha } = \frac{\alpha }{1+ \alpha },
	\end{align*}
	where the first inequality follows by submodularity, the second inequality follows since $|B| \leq k$ and hence $f(B) \leq f(S^*)$, and the last inequality holds since $f(A) \geq f(U) \geq \alpha \cdot f(S^*)$.
\end{proof}

\begin{claim}\label{claim:diversity-product}
	Let $f:2^E \to \R_+$ be a monotone metric diversity function (as discussed in Example~\ref{ex:diversity}), and denote $S^* := \argmax_{|S| \leq k} f(S)$.
	Let $U \subseteq E$ be such that $|U| \leq k$ and  $f(U) \geq \alpha \cdot f(S^*)$.
	Then, for any pair of sets $A,B$ of size at most $k$ such that $A \supseteq U$, it holds that
	\[
	\frac{f(A) }{f(A \cup B)} \geq \frac{\alpha }{5+ \alpha }.
	\]
\end{claim}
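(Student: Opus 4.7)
The plan is to reduce the claim to the single inequality $f_A(B) \le 5\, f(S^*)$, where $f_A(B) := f(A \cup B) - f(A)$. The desired bound $f(A)/f(A \cup B) \ge \alpha/(5+\alpha)$ rearranges to $f_A(B) \le (5/\alpha)\, f(A)$, and since $f$ is monotone and $A \supseteq U$, we have $f(A) \ge f(U) \ge \alpha\, f(S^*)$; thus $(5/\alpha)\, f(A) \ge 5\, f(S^*)$, and proving $f_A(B) \le 5\, f(S^*)$ immediately closes the argument. From here I would split by the value of $k$.

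For $k \ge 3$, I would establish the bound via a random-subsample argument. Let $n = |A \cup B| \le 2k$ and let $T$ be a uniformly random size-$k$ subset of $A \cup B$. Each unordered pair $\{x,y\} \subseteq A \cup B$ lies in $T$ with probability $k(k-1)/[n(n-1)]$, so $\E[f(T)] = f(A \cup B) \cdot k(k-1)/[n(n-1)]$. Since $T$ is a size-$k$ candidate for $S^*$, we have $\E[f(T)] \le f(S^*)$, which yields
\[
f(A \cup B) \le \frac{n(n-1)}{k(k-1)}\, f(S^*) \le \frac{2(2k-1)}{k-1}\, f(S^*) \le 5\, f(S^*) \qquad (k \ge 3),
\]
and hence $f_A(B) \le f(A \cup B) \le 5\, f(S^*)$.

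For $k = 2$, the random-subsample argument degrades to $f(A \cup B) \le 6\, f(S^*)$, so I would switch to direct counting. Since $|A \cup B| \le 4$ and $|A| \le 2$, the number of pairs $\{x,y\} \subseteq A \cup B$ not entirely contained in $A$ is at most $\binom{4}{2} - \binom{2}{2} = 5$. Each such pair contributes at most $d_{\max} := \max_{u \ne v \in E} d(u,v)$ to $f_A(B)$, and because the maximizing pair is itself a size-$2$ candidate for $S^*$, we have $d_{\max} \le f(S^*)$. Therefore $f_A(B) \le 5\, d_{\max} \le 5\, f(S^*)$, as required. (The case $k = 1$ is trivial since then $f(U) = 0$ forces $\alpha = 0$.)

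The main obstacle is precisely the $k = 2$ case: the random-subsample bound is off by a factor, producing $6$ instead of $5$, and this slack is actually attained (e.g.\ by an equilateral tetrahedron, where $f(A \cup B) = 6\, f(S^*)$ while $f_A(B) = 5\, f(S^*)$), so one genuinely has to argue about $f_A(B)$ rather than $f(A \cup B)$. The counting argument resolves this by noting that, even though $f(A \cup B)$ itself may reach $6\, f(S^*)$, the marginal $f_A(B)$ is determined by exactly the $5$ pairs that involve $B \setminus A$, each controlled by $d_{\max} \le f(S^*)$.
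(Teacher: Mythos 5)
Your proof is correct but takes a genuinely different route from the paper. The paper decomposes $f(A\cup B) = f(A) + f(B) + d(A,B)$, uses $f(B) \leq f(S^*)$, and controls the cross term via a deterministic partition: split $A$ and $B$ each into two halves of size at most $k/2$ and bound $d(A,B) \leq \sum_{i,j} f(A_i \cup B_j) \leq 4f(S^*)$. You instead reduce the claim to $f_A(B)\leq 5f(S^*)$ and, for $k\geq 3$, establish the stronger bound $f(A\cup B)\leq \frac{n(n-1)}{k(k-1)}f(S^*)$ (with $n=|A\cup B|\leq 2k$) by averaging over a uniformly random $k$-subset of $A\cup B$ — an argument that is both cleaner and asymptotically sharper, since for large $k$ it yields roughly $4f(S^*)$ where the partition argument always gives $5f(S^*)$. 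You then handle $k=2$ by counting the at most $5$ pairs touching $B\setminus A$, correctly observing (via the regular-tetrahedron example) that bounding $f(A\cup B)$ alone cannot work there and one must argue about the marginal $f_A(B)$ directly. Two small gaps worth a sentence each: the subsample step tacitly needs $n>k$ (when $n\leq k$ one has $f(A\cup B)\leq f(S^*)$ outright, since $A\cup B$ is feasible), and the bound $\binom{4}{2}-\binom{2}{2}=5$ assumes the worst case $|A\cup B|=4,\ |A|=2$; this is indeed maximal over the feasible range $|A|\leq 2$, $|A\cup B|\leq |A|+2$, but that check should be stated. Neither affects correctness, and the case split by $k$ is the only real cost of your approach relative to the paper's single uniform argument.
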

\begin{proof}
	Let $A,B \subseteq E$ be as above, and without loss of generality assume they are disjoint (if they are not then take $A$ and $B\setminus A$ instead).
	Let $d(A,B)$ denote the sum of distances between elements in $A$ and $B$. That is, $d(A,B) = \sum_{u \in A, v \in B} d(u,v)$. Then one can write $f(A \cup B) = f(A) + f(B) + d(A,B)$.

	Since $A$ and $B$ have cardinality at most $k$, we can partition $A$ into two disjoint sets $A_1$ and $A_2$ of size at most $k/2$, and similarly we can partition $B$ into two disjoint sets $B_1$ and $B_2$ of size at most $k/2$. It then follows that
	\[
	d(A,B) = \sum_{i,j \in \{1,2\}} d(A_i,B_j) \leq \sum_{i,j \in \{1,2\}} f(A_i \cup B_j) \leq 4 f(S^*),
	\]
	where the last inequality follows since $A_i \cup B_j$ are sets of size at most $k$.

	Combining this with the fact that $B$ is a set of  size at most $k$ we get
	\[
	f(A \cup B) = f(A) + f(B) + d(A,B) \leq f(A) + 5 f(S^*).
	\]
	Finally, using monotonicity and the fact that $A \supseteq U$ we obtain
	\[
	\frac{f(A) }{f(A \cup B)}    \geq \frac{f(A) }{f(A) + 5f(S^*)} = \frac{1 }{1+ 5 f(S^*) / f(A)}
	\geq \frac{1 }{1+ 5 / \alpha } = \frac{\alpha }{5+ \alpha },
	\]
\end{proof}

\end{document}